\newtheorem{theorem}{Theorem}[section]
\newtheorem{lemma}[theorem]{Lemma}
\newtheorem{corollary}[theorem]{Corollary}
\newcommand{\cT}{{\mathcal T}}
\begin{document}

\title[Phylogenetic Diversity and Diversity Indices]{Quantifying the Difference Between Phylogenetic Diversity and Diversity Indices}

\thanks{The second author was supported by the New Zealand Marsden Fund}

\author{Magnus Bordewich}
\address{Department of Computer science, Durham University, United Kingdom}
\email{m.j.r.bordewich@durham.ac.uk}

\author{Charles Semple}
\address{School of Mathematics and Statistics, University of Canterbury, Christchurch, New Zealand}
\email{charles.semple@canterbury.ac.nz}

\keywords{Phylogenetic tree, phylogenetic diversity, diversity indices, Fair proportion index, Equal Splits index}

\subjclass{}

\date{\today}

\maketitle

\begin{abstract}
Phylogenetic diversity is a popular measure for quantifying the biodiversity of a collection $Y$ of species, while phylogenetic diversity indices provide a way to apportion phylogenetic diversity to individual species. Typically, for some specific diversity index, the phylogenetic diversity of $Y$ is not equal to the sum of the diversity indices of the species in $Y.$ In this paper, we investigate the extent of this difference for two commonly-used indices: Fair Proportion and Equal Splits. In particular, we determine the maximum value of this difference under various instances including when the associated rooted phylogenetic tree is allowed to vary across all root phylogenetic trees with the same leaf set and whose edge lengths are constrained by either their total sum or their maximum value.
\end{abstract}

\section{Introduction}

Phylogenetic diversity (PD) is prominent measure in evolutionary biology to quantify the biological diversity of a collection of species. Intuitively, the phylogenetic diversity of such a collection quantifies how much of the `Tree of Life' is spanned by the species in the collection. Introduced by Faith in 1992~\cite{fai92}, PD has been analysed and applied in a variety of contexts for various taxa including plants, bacteria, and mammals~\cite{cad08, loz07, saf11}.

On the other hand, (phylogenetic) diversity indices, also called evolutionary distinctiveness measures, quantify an individual species' contribution to overall phylogenetic diversity, thus providing a convenient way to rank species as, for example, in conservation planning~\cite{red08, red14}. However, although the sum of the diversity indices across all species equates to the total phylogenetic diversity, in general, the sum of the diversity indices of a subset of the species differs from the subset's phylogenetic diversity. In this paper, we investigate the extent of this difference for two natural and well-known diversity indices, Fair Proportion and Equal Splits. Both of these indices are used, for example, in the conservation initiative `EDGE of Existence Programme' established by the Zoological Society of London~\cite{isa07}. A version of this investigation was {posed} in earlier work~\cite{haa08} {but}, apart from a related simulation study~\cite{red08}, it has not been explored further. Furthermore, the work in this paper is related to a recent study in~\cite{wic20} in which the authors consider the extent of the difference between the Fair Proportion and Equal Splits indices of a species. We next formalise the investigation in this paper, ending the introduction with a high-level overview of the main results.

Throughout the paper, $X$ is a non-empty finite set. A {\em rooted phylogenetic $X$-tree} $\cT$ is a rooted tree with leaf set $X$ whose non-leaf vertices have out-degree at least two. For technical reasons, if $|X|=1$, we additionally allow a rooted phylogenetic tree to consist of the single vertex in $X$. If all non-leaf vertices have out-degree exactly two, then $\cT$ is {\em binary}. For ease of reading, as all phylogenetic trees in this paper are rooted and binary, we will refer to a ``rooted binary phylogenetic tree'' as simply a ``phylogenetic tree''.

Let $\cT$ be a phylogenetic $X$-tree with root $\rho$, and consider a map $\ell: E\rightarrow \mathbb R^{\ge 0}$ from the set $E$ of edges of $\cT$ to the non-negative reals. Collectively, we denote $\cT$ and $\ell$ by the ordered pair $(\cT, \ell)$ and refer to $(\cT, \ell)$ as an {\em edge-weighted phylogenetic $X$-tree}. Furthermore, we use $L(\cT, \ell)$ to denote the total sum of the edge lengths of $(\cT, \ell)$. The {\em phylogenetic diversity} of a subset $Y$ of $X$ on $(\cT, \ell)$, denoted $PD_{(\cT, \ell)}(Y)$, is the sum of the weights of the edges in the (unique) minimal subtree of $\cT$ connecting the vertices in $Y\cup \{\rho\}$. Observe that $PD_{(\cT, \ell)}(X)=L(\cT, \ell)$. To illustrate, consider the edge-weighted phylogenetic $X$-tree shown in Fig.~\ref{tree}, where $X=\{x_1, x_2, \ldots, x_7\}$. The minimal subtree of $\cT$ connecting the vertices in $\{x_2, x_4, x_5\}\cup \{\rho\}$ is indicated by dashed edges, and so PD of $\{x_2, x_4, x_5\}$ is $18$. The PD of $\{x_5, x_7\}$ is $10$.

\begin{figure}
\center
\input{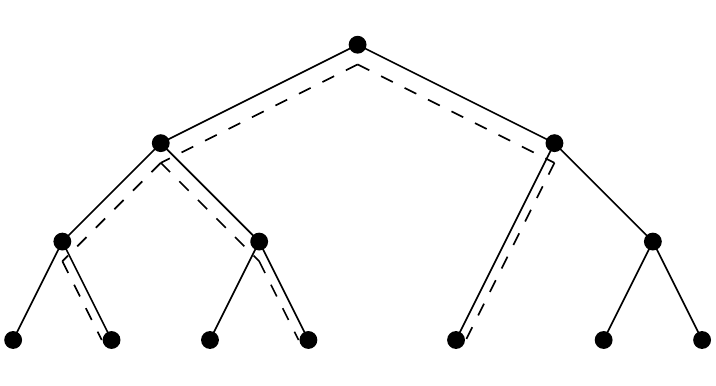_t}
\caption{An edge-weighted phylogenetic $X$-tree $(\cT, \ell)$. The minimal subtree of $\cT$ connecting the vertices in $\{x_2, x_4, x_5\}\cup \{\rho\}$ is indicated with dashed edges.}
\label{tree}
\end{figure}

A {\em phylogenetic diversity index} for an edge-weighted phylogenetic $X$-tree $(\cT, \ell)$ is a function $\varphi_{(\cT, \ell)}: X\rightarrow \mathbb R^{\ge 0}$ that assigns a score to each leaf of $\cT$ such that
$$\sum_{x\in X} \varphi_{(\cT, \ell)}(x)=PD_{(\cT, \ell)}(X)=L(\cT, \ell).$$
Furthermore, if, for each $x\in X$, we can write $\varphi_{(\cT, \ell)}(x)$ as a linear function of the edge lengths of $\cT$, that is,
\begin{align}
\varphi_{(\cT, \ell)}(x)=\sum_{e\in E} \gamma_{(\cT, \ell)}(x, e)\cdot \ell(e)
\label{linear1}
\end{align}
for some constants $\gamma_{(\cT, \ell)}(x, e)$ that are independent of $\ell(e)$, we say $\varphi_{(\cT, \ell)}$ is a {\em linear diversity index}. It is easily checked that an arbitrary function $\varphi_{(\cT, \ell)}$ of the form shown in~(\ref{linear1}) is a phylogenetic diversity index if and only if, for each edge $e$ of $\cT$, we have
\begin{align}
\sum_{x\in X} \gamma_{(\cT, \ell)}(x, e)=1.
\label{sum1}
\end{align}

Two well-studied linear diversity indices underlie the results in this paper. Let $(\cT, \ell)$ be an edge-weighted phylogenetic $X$-tree. The {\em Fair Proportion index} (FP) for a leaf $x\in X$, denoted $FP_{(\cT, \ell)}(x)$, is the value
$$FP_{(\cT, \ell)}(x)=\sum_{e\in P(\cT; \rho, x)} \frac{1}{n(e)}\ell(e),$$
where $P(\cT; \rho, x)$ denotes the (unique) path in $\cT$ from the root $\rho$ to $x$, and $n(e)$ denotes the number of leaves that are at the end of a directed path starting at the root and traversing $e$. Intuitively, the FP index distributes the length of $e$ evenly amongst its descendant leaves. The Fair Proportion index is also called `evolutionary distinctiveness'~\cite{isa07}. The second index is the {\em Equal Splits index} (ES) which for a leaf $x\in X$ is the value
$$ES_{(\cT, \ell)}(x)=\sum_{e\in P(\cT; \rho, x)} \frac{1}{\Pi(e, x)}\ell(e),$$
where $\Pi(e, x)=1$ if $e$ is the pendant edge of $\cT$ incident with $x$, and if $e=(u, v)$ is a non-pendant edge of $\cT$, then $\Pi(e, x)$ is the product of the out-degrees of the non-leaf vertices (including $v$) on the directed path from $v$ to $x$. Since all phylogenetic trees in this paper are binary, $\Pi(e, x)$ is always a power of $2$. In particular, if $e=(u, v)$ is a non-pendant edge of $\cT$, then $\Pi(e, x)$ is $2^m$, where $m$ is the number of edges from $v$ to $x$. The Fair Proportion and Equal Splits indices were introduced in~\cite{red03} and~\cite{red06}, respectively, and a direct comparison of these two indices was investigated recently in~\cite{wic20}.

As examples of the Fair Proportion and Equal Splits indices, consider the edge-weighted phylogenetic $X$-tree in Fig.~\ref{tree}. The FP indices for $x_5$ and $x_7$ are
$$FP_{(\cT, \ell)}(x_5) = {\textstyle \frac{1}{3}\cdot 3 + 1\cdot 4 = 5}$$
and
$$FP_{(\cT, \ell)}(x_7) = {\textstyle \frac{1}{3}\cdot 3 + \frac{1}{2}\cdot 1 + 1\cdot 2 = 3\frac{1}{2}}.$$
The ES indices for $x_5$ and $x_7$ are
$$ES_{(\cT, \ell)}(x_5) = {\textstyle \frac{1}{2}\cdot 3 + 1\cdot 4 = 5\frac{1}{2}}$$
and
$$ES_{(\cT, \ell)}(x_7) = {\textstyle \frac{1}{4}\cdot 3 + \frac{1}{2}\cdot 1 + 1\cdot 2 = 3\frac{1}{4}}.$$

For an edge-weighted phylogenetic $X$-tree $(\cT, \ell)$ and diversity index $\varphi$, although the sum of the diversity indices across all taxa in $X$ equates to $PD_{(T, \ell)}(X)$, the sum of the diversity indices across all taxa in a proper subset $Y$ of $X$ will typically not equal $PD_{(\cT, \ell)}(Y)$. For example, in Fig.~\ref{tree}, $PD_{(\cT, \ell)}(\{x_5, x_7\})=10$, but
$$\sum_{x\in \{x_5, x_7\}} FP_{(\cT, \ell)}(x) = {\textstyle 5 + 3\frac{1}{2} = 8\frac{1}{2}}.$$
To quantify this, set
\begin{align}
\Delta_{(\varphi, \cT, \ell)}(Y)=PD_{(\cT, \ell)}(Y)-\sum_{x\in Y} \varphi_{(\cT, \ell)}(x).
\label{delta}
\end{align} 
We refer to $\Delta_{(\varphi, \cT, \ell)}(Y)$ as the {\em diversity difference of $Y$} (relative to $(\cT, \ell)$ and $\varphi$). In this paper, we are interested in the extent of this difference for the Fair Proportion and Equal Splits indices.

The following lemma shows that the diversity difference is always non-negative provided $\varphi$ satisfies a natural distribution property. This property says that, for an edge-weighted phylogenetic $X$-tree $(\cT, \ell)$ and all $x\in X$, the value $\varphi_{(\cT, \ell)}(x)$ can be written as a non-negative linear function of the lengths of edges in the path from the root $\rho$ of $\cT$ to $x$.

\begin{lemma}
Let $(\cT, \ell)$ be an edge-weighted phylogenetic $X$-tree with root $\rho$, and let $\varphi_{(\cT, \ell)}$ be a linear diversity index such that, for each $x\in X$,
\begin{align}
\varphi_{(\cT, \ell)}(x)=\sum_{e\in P(\cT; \rho, x)} \gamma_{(\cT, \ell)}(x, e)\cdot \ell(e)
\label{linear2}
\end{align}
for some non-negative constants $\gamma_{(\cT, \ell)}(x, e)$ that are independent of $\ell(e)$. If $Y\subseteq X$, then
$$\Delta_{(\varphi, \cT, \ell)}(Y)=PD_{(\cT, \ell)}(Y)-\sum_{x\in Y} \varphi_{(\cT, \ell)}(x)\ge 0.$$
\label{non-negative}
\end{lemma}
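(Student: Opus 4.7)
The plan is to compare, edge by edge, the coefficient of each $\ell(e)$ in the two terms on the right-hand side of the definition of $\Delta_{(\varphi, \cT, \ell)}(Y)$, and to show that edge-wise, the coefficient appearing in $PD_{(\cT, \ell)}(Y)$ dominates the one appearing in $\sum_{x \in Y}\varphi_{(\cT, \ell)}(x)$. Since every $\ell(e)$ is non-negative, this suffices.

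First, I would rewrite $PD_{(\cT, \ell)}(Y) = \sum_{e \in E} \alpha(e)\cdot \ell(e)$, where $\alpha(e) = 1$ if $e$ lies on the path from $\rho$ to some $x \in Y$ (equivalently, $e$ lies in the minimal subtree connecting $Y \cup \{\rho\}$) and $\alpha(e) = 0$ otherwise. Next, using hypothesis~(\ref{linear2}) and extending the $\gamma_{(\cT, \ell)}(x, e)$ by zero for $e \notin P(\cT; \rho, x)$, I would expand
$$\sum_{x \in Y} \varphi_{(\cT, \ell)}(x) = \sum_{e \in E}\Bigl(\sum_{x \in Y,\, e \in P(\cT; \rho, x)} \gamma_{(\cT, \ell)}(x, e)\Bigr)\ell(e) =: \sum_{e \in E} \beta(e)\cdot \ell(e).$$

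Now I would compare $\alpha(e)$ and $\beta(e)$ for each edge $e$. If no leaf of $Y$ is a descendant of $e$, then $e \notin P(\cT; \rho, x)$ for every $x \in Y$, so $\alpha(e)=\beta(e)=0$. Otherwise $\alpha(e)=1$, and I must show $\beta(e) \le 1$. This is where the two given properties combine: by non-negativity of the constants and by extending the sum over $Y$ to a sum over $X$,
$$\beta(e) = \sum_{x \in Y,\, e \in P(\cT; \rho, x)} \gamma_{(\cT, \ell)}(x, e) \le \sum_{x \in X} \gamma_{(\cT, \ell)}(x, e) = 1,$$
where the final equality is~(\ref{sum1}), applicable because $\varphi_{(\cT, \ell)}$ is a linear diversity index. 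Hence $\alpha(e) - \beta(e) \ge 0$ for every $e$, and multiplying by $\ell(e) \ge 0$ and summing gives $\Delta_{(\varphi, \cT, \ell)}(Y) \ge 0$.

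There is no real obstacle here; the only subtlety worth flagging explicitly is the zero-extension of $\gamma_{(\cT, \ell)}(x, e)$ to edges off the path $P(\cT; \rho, x)$, which is what lets~(\ref{sum1}) be applied in the final inequality. Everything else is a bookkeeping exchange of summation order.
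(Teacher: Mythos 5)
Your proposal is correct and follows essentially the same route as the paper's proof: both expand $PD_{(\cT, \ell)}(Y)$ and $\sum_{x\in Y}\varphi_{(\cT, \ell)}(x)$ edge by edge (the paper over the minimal subtree $\cT(Y\cup\{\rho\})$, you over all of $E$ with an indicator coefficient, which is the same thing since~(\ref{linear2}) makes off-path coefficients vanish) and then bound each edge's coefficient by $1$ using non-negativity of the $\gamma_{(\cT, \ell)}(x, e)$ together with~(\ref{sum1}). Your explicit zero-extension of the $\gamma$'s is just a slightly more pedantic bookkeeping of a step the paper leaves implicit.
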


\begin{proof}
Let $Y\subseteq X$. Then
$$PD_{(\cT, \ell)}(Y)=\sum_{e\in \cT(Y\cup \{\rho\})} \ell(e)$$
and
$$\sum_{x\in Y} \varphi_{(\cT, \ell)}(x)=\sum_{e\in \cT(Y\cup \{\rho\})} \left(\sum_{x\in Y} \gamma_{(\cT, \ell)}(x, e)\right)\cdot \ell(e),$$
where $\cT(Y\cup \{\rho\})$ denotes the minimal subtree of $\cT$ connecting the leaves in $Y$ and $\rho$. Since $\gamma_{(\cT, \ell)}(x, e)$ is non-negative for all $x\in X$ and edges $e$ of $\cT$, it follows by (\ref{sum1}) that $\sum_{x\in Y} \gamma_{(\cT, \ell)}(x, e)\le 1$. Hence $\Delta_{(\varphi, \cT, \ell)}(Y)\ge 0$ as required.
\end{proof}

We call a phylogenetic diversity index that satisfies (\ref{linear2}) a {\em descendant diversity index}.  It is easily checked that the Fair Proportion and Equal Splits indices are examples of descendant diversity indices.

In this paper, for each of the Fair Proportion and Equal Splits indices, we will determine, for all positive integers $k$, the maximum value of the diversity difference over all subsets of $X$ of size $k$ under the following instances:
\begin{enumerate}[(i)]
\item a fixed phylogenetic $X$-tree whose edge lengths are fixed;

\item a fixed phylogenetic $X$-tree whose edge lengths are constrained by either their (total) sum or their maximum value; and

\item across all phylogenetic $X$-trees whose edge lengths are constrained by either their (total) sum or their maximum value.
\end{enumerate}
In particular, for (i) and (ii), we give polynomial-time algorithms for finding these maximum values for an arbitrary descendant diversity index while, for (iii), we characterise the edge-weight phylogenetic $X$-trees and subsets of size $k$ that realise this maximum value for FP and ES. For (iii), it turns out that, in the case that the edge lengths are constrained by their sum, the class of phylogenetic trees that maximise the diversity difference under ES is always a subclass of the phylogenetic trees that maximise the diversity difference under FP. The corresponding characterisations are stated as Theorems~\ref{across-sum-es} and~\ref{across-sum-fp}, respectively. However, in the case that the edge lengths are constrained by their maximum value, the class of phylogenetic trees that maximise the diversity difference under FP and ES coincide (Theorem~\ref{across-max-fp}).

The paper is organised as follows. The next section contains some preliminaries that are used throughout the paper. Section~\ref{fixed-tree} considers maximising the diversity difference on a fixed phylogenetic tree, while Section~\ref{across-trees} considers maximising the diversity difference across all phylogenetic trees. The last section, Section~\ref{discussion}, consists of a brief discussion.

\section{Preliminaries}

Let $\cT$ be a phylogenetic $X$-tree, and let $t$ and $w$ be vertices of $\cT$. If $P$ is a (directed) path in $\cT$ from $t$ to $w$, the length of $P$, denoted $|P|$, is the number of edges in $P$. We denote this path by $P(\cT; t, w)$. Furthermore, we sometimes refer to a path in $\cT$ from an edge $e=(u, v)$ to $w$, in which case, we mean from $v$ to $w$, and denote this path by $P(\cT; e, w)$.

\noindent {\bf Subtrees.} Let $\cT$ be a phylogenetic $X$-tree with root $\rho$, and let $X'$ be a subset of $X\cup \{\rho\}$. The minimal subtree of $\cT$ connecting the vertices in $X'$ is denoted by $\cT(X')$. Furthermore, if $\rho\not\in X'$, the {\em restriction} of $\cT$ to $X'$, denoted by $\cT|X'$, is the phylogenetic $X'$-tree obtained from $\cT(X')$ by suppressing all non-root vertices of degree two.

Let $\cT$ be a phylogenetic $X$-tree and let $e=(u, v)$ be an edge of $\cT$. The subset of leaves that are descendants of $v$ is called a {\em cluster} of $\cT$ and is denoted by $C(v)$. The set $X$ as well as each of the singleton subsets of $X$ are clusters of every phylogenetic $X$-tree, and so such clusters are called the {\em trivial} clusters of $\cT$. Hence a cluster $X'$ of $\cT$ is {\em non-trivial} if $2\le |X'|< |X|$. Thus if $\cT$ has a non-trivial cluster, then $|X|\ge 3$. Furthermore, the rooted subtree obtained from $\cT$ by deleting $e$ and whose root is $v$ is a {\em pendant subtree} of $\cT$.

\noindent {\bf Cherries and chains.} A $2$-element subset of $X$, say $\{a, b\}$, is a {\em cherry} of a phylogenetic $X$-tree $\cT$ if $a$ and $b$ have the same parent. Now let $X'=\{a_1, a_2, \ldots, a_m\}$ be a subset of $X$ such that $m\ge 2$ and, for all $i\in \{1, 2, \ldots, m\}$, let $p_i$ denote the parent of $a_i$. We call $X'$ a {\em chain} of $\cT$ if there is an ordering of the elements of $X'$, say $(a_1, a_2, \ldots, a_m)$, such that
$$p_m,\, p_{m-1},\, \ldots,\, p_1$$
is a (directed) path in $\cT$, in which case, $p_m$ is the {\em first parent} of the chain and $p_1$ is the {\em last parent} of the chain. Note that we still refer to $X'$ as a chain of $\cT$ if $\{p_1, p_2\}$ is a cherry and $p_m, p_{m+1}, \ldots, p_2=p_1$ is a path in $\cT$. The {\em edge set of $X'$} consists of the pendant edges incident with a leaf in $X'$ as well as the edges in the path $p_m,\, p_{m-1},\, \ldots,\, p_1$.

\noindent {\bf Subtree prune and regraft.} Let $\cT$ be a phylogenetic $X$-tree. {For the purposes of the upcoming operation,} view the root $\rho$ of $\cT$ adjoined to the original root via a pendant edge. Let $(u, v)$ be an edge of $\cT$ such that $u\neq \rho$. Let $\cT'$ be the phylogenetic $X$-tree obtained from $\cT$ by deleting $(u, v)$ and suppressing $u$, and then reattaching the pendant subtree $\cT(C(v))$ by subdividing an edge, $f$ say, in the component of $\cT\backslash (u, v)$ containing $\rho$ with a new vertex $u'$ and adjoining $\cT(C(v))$ with a new edge $(u', v)$. We say that $\cT'$ has been obtained from $\cT$ by a {\em rooted subtree prune and regraft} operation. More specifically, in this operation, we have {\em pruned} $C(v)$ and {\em regrafted} it to $f$. In the special case $f$ is the pendant edge incident with the root of $\cT$, we say that $C(v)$ is pruned and {\em regrafted to $\rho$}. At the completion of this operation, we no longer view the root $\rho$ as being joined via a pendant edge, and so the root of $\cT'$ (labelled $\rho$) is the unique vertex of $\cT'$ of in-degree zero after deleting the temporary root and its incident edge.

\noindent {\bf Diversity indices.} Let $(\cT, \ell)$ be an edge-weighted phylogenetic $X$-tree with edge set $E$, and let $\varphi$ be a descendant diversity index. Then, for each $x\in X$,
$$\varphi_{(\cT, \ell)}(x)=\sum_{e\in P(\cT; \rho, x)} \gamma_{(\cT, \ell)}(x, e)\cdot \ell(e)$$
for some non-negative constants $\gamma_{(\cT, \ell)}(x, e)$ that are independent of $\ell(e)$. For a subset $Y$ of $X$, we denote the contribution of an edge $e=(u, v)$ of $\cT$ to $\Delta_{(\varphi, \cT, \ell)}(Y)$ by $\lambda_{(\cT, \ell)}(e)$, that is, 
$$\lambda_{(\cT, \ell)}(e)=\ell(e)-\sum_{y\in Y\cap C(v)} \gamma_{(\cT, \ell)}(y, e)\cdot \ell(e)= l(e)\left( 1- \sum_{y\in Y\cap C(v)} \gamma_{(\cT, \ell)}(y, e)\right)$$
if $Y\cap C(v)$ is non-empty, and $\lambda_{(\cT, \ell)}(e)=0$ otherwise. Thus
$$\sum_{e\in E} \lambda_{(\cT, \ell)}(e)=\Delta_{(\varphi, \cT, \ell)}(Y).$$
It will be clear in the context which subset of $X$ and which descendant diversity index $\lambda_{(\cT, \ell)}(e)$ is referring to. Observe that $\lambda_{(\cT, \ell)}(e)\ge 0$. Moreover, if either $C(v)\subseteq Y$ or $C(v)\subseteq X-Y$, then $\lambda_{(\cT, \ell)}(e)=0$.
$$\lambda_{(\cT, \ell')}(e)=\ell(e)+t-\sum_{y\in Y\cap C(v)} \gamma_{(\cT, \ell)}(y, e)\cdot (\ell(e)+t).$$

\section{Maximising the Diversity Difference on a Fixed Tree}
\label{fixed-tree}

In this section, we consider the problem of determining the maximum value of the diversity difference for a fixed phylogenetic tree for when $\varphi$ is an arbitrary descendant diversity index.

\subsection{Fixed edge lengths}

We first consider the instance of a fixed phylogenetic tree with fixed edge lengths. Let $(\cT, \ell)$ be an edge-weighted phylogenetic $X$-tree, and let $\varphi$ be an arbitrary linear diversity index. Now let $(\cT, \ell')$ be the edge-weighted phylogenetic $X$-tree obtained from $(\cT, \ell)$ by adding, for each $x\in X$, the value
$$L(\cT, \ell)-\varphi_{(\cT, \ell)}(x)$$
to the length of the pendant edge incident with $x$. For all non-pendant edges $e$, we have $\ell'(e)=\ell(e)$. Note that the choice of $L(\cT, \ell)$ is not unique, we simply require a constant value that is large enough so that all pendant edges have non-negative lengths.

\begin{lemma}
Let $(\cT, \ell)$ be an edge-weighted phylogenetic $X$-tree, let $\varphi$ be a linear diversity index, let $k$ be a non-negative integer, and let $Y$ be a subset of $X$ of size $k$. Then
$$\Delta_{(\varphi, \cT, \ell)}(Y)=PD_{(\cT, \ell')}(Y)-k\cdot L(\cT, \ell).$$
\label{modify}
\end{lemma}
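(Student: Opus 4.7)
The plan is to directly unpack the definition of $PD_{(\cT, \ell')}(Y)$ in terms of the original lengths $\ell$ and show that the extra contribution on the right-hand side exactly matches $k\cdot L(\cT, \ell) + \Delta_{(\varphi, \cT, \ell)}(Y) - PD_{(\cT, \ell)}(Y) + PD_{(\cT, \ell)}(Y)$, i.e.\ a bookkeeping argument on edge lengths.

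First I would note that $(\cT, \ell')$ differs from $(\cT, \ell)$ only on pendant edges: for each $x \in X$, the length of the pendant edge incident with $x$ was increased by $L(\cT, \ell) - \varphi_{(\cT, \ell)}(x)$, while $\ell'(e) = \ell(e)$ for every non-pendant edge $e$. The next key observation is a structural one about the minimal subtree $\cT(Y \cup \{\rho\})$: for a leaf $x \in X$, the pendant edge incident with $x$ lies in $\cT(Y \cup \{\rho\})$ if and only if $x \in Y$. The non-pendant edges of $\cT(Y \cup \{\rho\})$ are exactly the same regardless of whether we work with $\ell$ or $\ell'$, since the underlying tree is unchanged.

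Combining these two observations, I would write
\begin{align*}
PD_{(\cT, \ell')}(Y) &= \sum_{e \in \cT(Y \cup \{\rho\})} \ell'(e) \\
&= \sum_{e \in \cT(Y \cup \{\rho\})} \ell(e) + \sum_{x \in Y} \bigl(L(\cT, \ell) - \varphi_{(\cT, \ell)}(x)\bigr) \\
&= PD_{(\cT, \ell)}(Y) + k\cdot L(\cT, \ell) - \sum_{x \in Y} \varphi_{(\cT, \ell)}(x).
\end{align*}
By the definition of $\Delta_{(\varphi, \cT, \ell)}(Y)$ in~(\ref{delta}), the last line equals $\Delta_{(\varphi, \cT, \ell)}(Y) + k\cdot L(\cT, \ell)$, and rearranging gives the claimed identity.

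There is no real obstacle here; the only point that requires a moment's care is justifying that the pendant edge incident with $x$ belongs to $\cT(Y \cup \{\rho\})$ precisely when $x \in Y$ (so that the modifications to pendant edges of leaves in $X \setminus Y$ do not contribute to $PD_{(\cT, \ell')}(Y)$). Once this is stated, the rest is a direct computation from the definitions, and does not use any property of $\varphi$ beyond linearity (indeed, the argument does not even need the constants $\gamma_{(\cT, \ell)}(x, e)$ to be non-negative, which is why the statement is given for arbitrary linear diversity indices rather than only descendant ones).
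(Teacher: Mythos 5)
Your proof is correct and is essentially the paper's own argument: the paper performs the same telescoping computation, starting from $\Delta_{(\varphi, \cT, \ell)}(Y)$ and substituting $PD_{(\cT, \ell)}(Y) = PD_{(\cT, \ell')}(Y) - \sum_{x\in Y}\bigl(L(\cT, \ell) - \varphi_{(\cT, \ell)}(x)\bigr)$, whereas you run it in the opposite direction. Your explicit justification that the pendant edge incident with $x$ lies in $\cT(Y\cup\{\rho\})$ precisely when $x\in Y$ is left implicit in the paper, and is a worthwhile point to make.
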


\begin{proof}
Now,
\begin{align*}
\Delta_{(\varphi, \cT, \ell)}(Y) & = PD_{(\cT, \ell)} (Y) - \sum_{x\in Y} \varphi_{(\cT, \ell)}(x) \\
& = \left(PD_{(\cT, \ell')}(Y) - \sum_{x\in Y} \left(L(\cT, \ell) - \varphi_{(\cT, \ell)}(x)\right)\right) - \sum_{x\in Y} \varphi_{(\cT, \ell)}(x) \\
& = PD_{(\cT, \ell')}(Y) -k\cdot L(\cT, \ell) + \sum_{x\in Y} \varphi_{(\cT, \ell)}(x) - \sum_{x\in Y} \varphi_{(\cT, \ell)}(x) \\
& = PD_{(\cT, \ell')}(Y) - k\cdot L(\cT, \ell).
\end{align*}
\end{proof}

Lemma~\ref{modify} is the basis of the following algorithm which finds the maximum value of the diversity difference under a linear diversity index for a given edge-weighted phylogenetic tree. It is well known that Step~2 of the algorithm, that is, finding a subset $Y$ of $X$ of size $k$ such that $PD_{(\cT, \ell')}(Y)\ge PD_{(\cT, \ell')}(Y')$ for all subsets $Y'$ of $X$ of size $k$, takes quadratic time in the size of $X$~\cite{par05, ste05}.

\noindent {\sc MaxDiversityDiff} \\
\noindent {\bf Input:} An edge-weighted phylogenetic $X$-tree $(\cT, \ell)$, a linear diversity index $\varphi$, and a non-negative integer $k$. \\
\noindent {\bf Output:} A subset $Y$ of $X$ of size $k$ that maximises $\Delta_{(\varphi, \cT, \ell)}(Y)$ across all subsets of $X$ of size $k$.
\begin{enumerate}[1.]
\item Let $\ell': E(\cT)\rightarrow \mathbb R^{\ge 0}$ be the map defined by setting $\ell'(e)=L(\cT, \ell)-\varphi_{(\cT, \ell)}(x)$ for all pendant edges $e$, where $x$ is the leaf incident with $e$, and $\ell'(e)=\ell(e)$ for all non-pendant edges. Construct $(\cT, \ell')$.

\item Find a subset $Y$ of $X$ of size $k$ such that $PD_{(\cT, \ell')}(Y)\ge PD_{(\cT, \ell')}(Y')$ for all subsets $Y'$ of $X$ of size $k$.

\item Return $Y$ and $\Delta_{(\varphi, T, \ell)}(Y)=PD_{(\cT, \ell')}(Y)-k\cdot L(\cT, \ell)$.
\end{enumerate}

The next theorem is an immediate consequence of the discussion prior to the description of {\sc MaxDiversityDiff} and the fact that the construction of $(\cT, \ell')$ from $(\cT, \ell)$ takes linear time (in the size of $X$).

\begin{theorem}
Let $(\cT, \ell)$ be an edge-weighted phylogenetic $X$-tree, let $k$ be a positive integer, and let $\varphi$ be a linear diversity index. Then applying {\sc MaxDiversityDiff} to $(\cT, \ell)$, $\varphi$, and $k$ correctly returns a subset $Y$ of $X$ of size $k$ that maximises
$$\Delta_{(\varphi, \cT, \ell)}(Y)=PD_{(\cT, \ell)}(Y)-\sum_{x\in Y} \varphi_{(\cT, \ell)}(x)$$
across all subsets of $X$ of size $k$ in time $O\left(|X|^2\right)$.
\label{poly-time}
\end{theorem}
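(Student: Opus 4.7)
The plan is to reduce the task of maximising $\Delta_{(\varphi, \cT, \ell)}$ over size-$k$ subsets to the classical problem of maximising phylogenetic diversity over size-$k$ subsets on the reweighted tree $(\cT, \ell')$ built in Step~1, and then quote the known $O(|X|^2)$-time algorithm of~\cite{par05, ste05} for the latter. The correctness of this reduction is precisely the content of Lemma~\ref{modify}, so essentially all of the mathematical work has already been done; what remains is to check that the algorithmic bookkeeping fits together.

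For correctness, I would fix an arbitrary subset $Y$ of $X$ of size $k$ and apply Lemma~\ref{modify} to obtain
$$\Delta_{(\varphi, \cT, \ell)}(Y)=PD_{(\cT, \ell')}(Y)-k\cdot L(\cT, \ell).$$
The term $k\cdot L(\cT, \ell)$ depends only on the input and on $k$, not on $Y$, so the functions $Y\mapsto \Delta_{(\varphi, \cT, \ell)}(Y)$ and $Y\mapsto PD_{(\cT, \ell')}(Y)$ differ by the same additive constant across all size-$k$ subsets. Consequently, any size-$k$ subset $Y$ maximising $PD_{(\cT, \ell')}$ also maximises $\Delta_{(\varphi, \cT, \ell)}$, and the value $PD_{(\cT, \ell')}(Y)-k\cdot L(\cT, \ell)$ reported in Step~3 is exactly the maximum diversity difference. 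Since Step~2 is specified to return such a PD-maximiser, the output is valid.

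For the running time, the tree $\cT$ has $O(|X|)$ edges, so $L(\cT, \ell)$ is computable in $O(|X|)$ time, and once the values $\varphi_{(\cT, \ell)}(x)$ are available the pendant-edge reweighting carried out in Step~1 requires $O(|X|)$ further operations; the construction of $(\cT, \ell')$ is therefore linear, as noted in the discussion preceding the theorem. Step~2 invokes the $O(|X|^2)$ algorithm of~\cite{par05, ste05}, which dominates the total cost, and Step~3 is a single $O(|X|)$ PD evaluation together with a subtraction. The overall running time is thus $O(|X|^2)$, as claimed. There is no real obstacle in the argument: the substantive content has been absorbed into Lemma~\ref{modify} and the cited maximum-PD algorithm. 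The one implicit caveat worth recording is that the $O(|X|^2)$ bound assumes the values $\varphi_{(\cT, \ell)}(x)$ (or the coefficients $\gamma_{(\cT, \ell)}(x, e)$) can be produced within this budget, which is straightforward for the descendant diversity indices motivating the paper, in particular the Fair Proportion and Equal Splits indices.
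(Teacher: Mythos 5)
Your proposal is correct and follows essentially the same route as the paper: the paper also derives Theorem~\ref{poly-time} directly from Lemma~\ref{modify} (the constant-offset reduction to maximising $PD_{(\cT, \ell')}$ over size-$k$ subsets), the quadratic-time PD-maximisation algorithm of~\cite{par05, ste05}, and the linear-time construction of $(\cT, \ell')$. Your closing caveat about the cost of evaluating $\varphi_{(\cT, \ell)}(x)$ is a reasonable implicit assumption that the paper leaves unstated.
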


\subsection{Maximum sum of edge lengths}

Now consider the problem of maximising the diversity difference on a fixed phylogenetic tree whose edge lengths are constrained by their (total) sum. We will make use of the next lemma in this subsection as well as in Section~\ref{across-trees}.

\begin{lemma}
Let $k$ be a positive integer, let $\cT$ be a phylogenetic $X$-tree, let $m$ be a positive real, and let $\varphi$ be a descendant diversity index. Let $\ell: E\rightarrow \mathbb R^{\ge 0}$ be a map on the set $E$ of edges of $\cT$ with $L(\cT, \ell)=m$ and let $Y$ be a subset of $X$ of size $k$ that maximises
$$\Delta_{(\varphi, \cT, \ell)}(Y)=PD_{(\cT, \ell)}(Y)-\sum_{x\in Y} \varphi_{(\cT, \ell)}(x)$$
across all edge weightings $\ell'$ of $\cT$ with $L(\cT, \ell')=m$ and all subsets of $X$ of size $k$. Suppose that $e_1=(u_1, v_1)$ and $e_2=(u_2, v_2)$ are distinct edges of $\cT$ such that $\ell(e_1) > 0$ and $\ell(e_2) > 0$. Then
\begin{enumerate}[{\rm (i)}]
\item $\sum_{y\in Y\cap C(v_1)} \gamma_{(\cT, \ell)}(y, e_1)=\sum_{y\in Y\cap C(v_2)} \gamma_{(\cT, \ell)}(y, e_2).$

\item $\Delta_{(\varphi, \cT, \ell')}(Y)=\Delta_{(\varphi, \cT, \ell)}(Y)$, where $\ell'(e_1)=\ell(e_1)+\ell(e_2)$, $\ell'(e_2)=0$, and $\ell'(f)=\ell(f)$ for all $f\in E-\{e_1, e_2\}$.
\end{enumerate}
\label{new-optimal}
\end{lemma}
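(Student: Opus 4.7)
The plan is to exploit a simple linearity argument. My first step is to rewrite $\Delta_{(\varphi, \cT, \ell)}(Y)$ in a form that is linear in $\ell$. Using the decomposition at the end of the Preliminaries, and the fact that for a descendant diversity index $\varphi$ the coefficients $\gamma_{(\cT, \ell)}(y, e)$ depend only on $\cT$, $\varphi$, $y$ and $e$ (and not on the weighting $\ell$), I would define $c_e = 1 - \sum_{y \in Y \cap C(v)} \gamma_{(\cT, \ell)}(y, e)$ when $Y \cap C(v)$ is non-empty, and $c_e = 0$ otherwise. Each $c_e$ is then a constant as $\ell$ varies, and $\Delta_{(\varphi, \cT, \ell)}(Y) = \sum_{e \in E} \ell(e) \cdot c_e$.

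The main step is a local perturbation. I would consider the one-parameter family $\ell_t : E \to \mathbb{R}^{\ge 0}$ defined by $\ell_t(e_1) = \ell(e_1) + t$, $\ell_t(e_2) = \ell(e_2) - t$, and $\ell_t(f) = \ell(f)$ for every other edge $f$. For $t \in [-\ell(e_1), \ell(e_2)]$, the map $\ell_t$ is non-negative with $L(\cT, \ell_t) = m$, so the hypothesised global maximality of $(\ell, Y)$ forces $\Delta_{(\varphi, \cT, \ell_t)}(Y) \le \Delta_{(\varphi, \cT, \ell)}(Y)$. By linearity this inequality rearranges to $t(c_{e_1} - c_{e_2}) \le 0$, which must hold throughout a two-sided open neighbourhood of $0$ because $\ell(e_1), \ell(e_2) > 0$. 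Hence $c_{e_1} = c_{e_2}$, which, after unpacking the definition of $c_{e_i}$, is exactly statement (i). Taking the extreme value $t = \ell(e_2)$ then recovers the weighting $\ell'$ described in (ii) and yields $\Delta_{(\varphi, \cT, \ell')}(Y) - \Delta_{(\varphi, \cT, \ell)}(Y) = \ell(e_2)(c_{e_1} - c_{e_2}) = 0$, proving (ii).

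The main obstacle is the minor bookkeeping around the convention that $c_{e_i}$ is set to $0$ when $Y \cap C(v_i) = \emptyset$, rather than being given by the formula $1 - \sum \gamma$. In that degenerate case the edge $e_i$ contributes $0$ to $\Delta_{(\varphi, \cT, \ell)}(Y)$ regardless of its length, so a short side argument (moving all of $\ell(e_i)$ to any edge with a larger $c$-value to obtain a strictly larger $\Delta$) shows that, in any global maximum with $\ell(e_1) > 0$ and $\ell(e_2) > 0$, either both $Y \cap C(v_i)$ are non-empty or the entire maximum equals $0$ and both sides of (i) collapse to a common value; in either outcome the stated identities reduce to the generic case already handled. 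Beyond this minor case analysis, the proof is essentially a single-variable linear-optimisation argument.
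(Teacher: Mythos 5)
Your proof is correct and is essentially the paper's own argument: the paper establishes (i) by exactly this exchange-and-linearity reasoning, shifting all of $\ell(e_2)$ onto $e_1$ (and symmetrically) to contradict maximality when the coefficient sums differ --- which corresponds to the endpoints $t=\ell(e_2)$ and $t=-\ell(e_1)$ of your one-parameter family --- and then reads off (ii) from the resulting equality. If anything, your explicit attention to the convention that $\lambda_{(\cT, \ell)}(e)=0$ when $Y\cap C(v)=\emptyset$ is more careful than the paper's proof, which applies the formula $\ell(e)\bigl(1-\sum_{y\in Y\cap C(v)}\gamma_{(\cT, \ell)}(y, e)\bigr)$ to $e_1$ and $e_2$ without comment in that degenerate case.
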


\begin{proof}
Let $\ell: E\rightarrow \mathbb R^{\ge 0}$ be such a map. For each $i\in \{1, 2\}$, let $e_i=(u_i, v_i)$ and, for each edge $e$ of $\cT$, define $\lambda_{(\cT, \ell)}(e)$ and $\lambda_{(\cT, \ell')}(e)$ to be the contribution of $e$ to $\Delta_{(\varphi, \cT, \ell)}(Y)$ and $\Delta_{(\varphi, \cT, \ell')}(Y)$, respectively. To prove the lemma, it suffices to show that
$$\lambda_{(\cT, \ell)}(e_1)+\lambda_{(\cT, \ell)}(e_2) = \lambda_{(\cT, \ell')}(e_1)+\lambda_{(\cT, \ell')}(e_2).$$
Since $\ell'(e_2)=0$, it follows that $\lambda_{(\cT, \ell')}(e_2)=0$. Now, for each $i\in \{1, 2\}$,
$$\lambda_{(\cT, \ell)}(e_i)=\ell(e_i)-\sum_{y\in Y\cap C(v_i)}\gamma_{(\cT, \ell)}(y, e_i)\cdot \ell(e_i).$$
If
$$\sum_{y\in Y\cap C(v_1)}\gamma_{(\cT, \ell)}(y, e_1) < \sum_{y\in Y\cap C(v_2)}\gamma_{(\cT, \ell)}(y, e_2),$$
then, by linearity,
\begin{align*}
\lambda_{(\cT, \ell)}(e_1)+\lambda_{(\cT, \ell)}(e_2) & < \ell(e_1)+\ell(e_2)-\sum_{y\in Y\cap C(v_1)}\gamma_{(\cT, \ell)}(y, e_1)\cdot (\ell(e_1)+\ell(e_2)) \\
& = \lambda_{(\cT, \ell')}(e_1)+\lambda_{(\cT, \ell')}(e_2),
\end{align*}
a contradiction to the maximality of $\ell$. Using a symmetric argument, it follows that
$$\sum_{y\in Y\cap C(v_1)}\gamma_{(\cT, \ell)}(y, e_1)=\sum_{y\in Y\cap C(v_2)}\gamma_{(\cT, \ell)}(y, e_2).$$
Hence
\begin{align*}
\lambda_{(\cT, \ell)}(e_1)+\lambda_{(\cT, \ell)}(e_2) & = \ell(e_1) + \ell(e_2) - \sum_{y\in Y\cap C(v_1)} \gamma_{(\cT, \ell)}(y, e_1)\cdot (\ell(e_1) + \ell(e_2)) \\
& = \lambda_{(\cT, \ell')}(e_1).
\end{align*}
This completes the proof of the lemma.
\end{proof}

By repeated applications of Lemma~\ref{new-optimal}(ii), we obtain the following corollary.

\begin{corollary}
Let $\cT$ be a phylogenetic $X$-tree, let $k$ be a positive integer, let $m$ be a positive real, and let $\varphi$ be a descendant diversity index. Then there exists a weighting $\ell: E\rightarrow \mathbb R^{\ge 0}$ of the edge set $E$ of $\cT$ with $\ell(e)=m$ for some edge $e\in E$ and a subset $Y$ of $X$ of size $k$ that maximises
$$\Delta_{(\varphi, \cT, \ell)}(Y)=PD_{(\cT, \ell)}(Y)-\sum_{x\in Y} \varphi_{(\cT, \ell)}(x)$$
across all edge weightings $\ell'$ of $\cT$ with $L(\cT, \ell')=m$ and all subsets of $X$ of size $k$.
\label{single-edge}
\end{corollary}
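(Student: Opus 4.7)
My plan is to derive the corollary by iterating Lemma~\ref{new-optimal}(ii), starting from an arbitrary maximising pair and repeatedly collapsing weight from one positively-weighted edge onto another, until all the weight is concentrated on a single edge.

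First, I would establish that a maximiser actually exists. The set of edge weightings $\ell: E \rightarrow \mathbb{R}^{\geq 0}$ with $L(\cT, \ell) = m$ is a compact simplex in $\mathbb{R}^{|E|}$, the set of size-$k$ subsets of $X$ is finite, and for each fixed $Y$ the map $\ell \mapsto \Delta_{(\varphi, \cT, \ell)}(Y)$ is a linear (hence continuous) function of the edge weights, as is plain from the descendant-index expression and the definition of $PD$. Taking the maximum over the finitely many $Y$ of size $k$ preserves continuity, so the maximum is attained at some pair $(\ell, Y)$.

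Next, suppose $\ell$ has two distinct edges $e_1, e_2$ with $\ell(e_1) > 0$ and $\ell(e_2) > 0$. Lemma~\ref{new-optimal}(ii) then produces a new weighting $\ell'$, obtained by moving all of $\ell(e_2)$ onto $e_1$ and leaving every other edge unchanged, such that $L(\cT, \ell') = m$ and $\Delta_{(\varphi, \cT, \ell')}(Y) = \Delta_{(\varphi, \cT, \ell)}(Y)$. Hence $(\ell', Y)$ is again optimal and has strictly fewer positively-weighted edges than $(\ell, Y)$. Iterating, at each stage I apply Lemma~\ref{new-optimal}(ii) to the current optimal pair; the lemma's hypotheses remain valid because each intermediate pair is still a maximiser. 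Since $E$ is finite, after at most $|E|-1$ steps the weighting is supported on a single edge $e$, giving $\ell(e) = m$ and producing the weighting required by the statement.

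The main subtlety is simply to guarantee that Lemma~\ref{new-optimal}(ii) is applicable at every step, which is automatic because the transformed pair is itself optimal; beyond that the argument is a routine finite induction on the number of positively-weighted edges. The only other detail worth flagging is existence of the initial maximiser, which follows from the compactness/continuity observation above.
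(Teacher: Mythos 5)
Your proposal is correct and follows essentially the same route as the paper, whose entire proof is the one-line observation that the corollary follows ``by repeated applications of Lemma~\ref{new-optimal}(ii)''; your finite induction on the number of positively-weighted edges is exactly that iteration made explicit. The one thing you add beyond the paper is the compactness/continuity argument for the existence of an initial maximiser, which the paper (and the hypothesis of Lemma~\ref{new-optimal}) leaves implicit --- a worthwhile detail, and your justification of it is sound given that the coefficients $\gamma_{(\cT,\ell)}(x,e)$ are treated as constants of the topology.
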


It follows by Corollary~\ref{single-edge} that, given $\cT$, $k$, $m$, and $\varphi$ as in its statement, we can find an edge wighting $\ell: E\rightarrow \mathbb R^{\ge 0}$ that maximises $\Delta_{(\varphi, \cT, \ell)}(Y)$ across all subsets of $X$ of size $k$ as follows. For all $e\in E$, let $(\cT, \ell_e)$ denote the edge-weighted phylogenetic $X$-tree, where $\ell_e$ is the edge weighting of $\cT$ in which $\ell(e)=m$ and $\ell(f)=0$ for all $f\in E-\{e\}$. Now, for each $e\in E$, apply Theorem~\ref{poly-time} and, more particularly, {\sc MaxDiversityDiff} to $(\cT, \ell_e)$, $k$, and $\varphi$. The maximum of the values returned by these applications gives the desired value (as well as a subset of $X$ of size $k$ realising this value). Since the total number of edges in $\cT$ is $2|X|-2$, we have the next theorem.

\begin{theorem}
Let $\cT$ be a phylogenetic $X$-tree, let $k$ be a positive integer, let $m$ be a positive real, and let $\varphi$ be a descendant diversity index. Then finding an edge weighting $\ell$ of $\cT$ with $L(\cT, \ell)=m$ and a subset $Y$ of $X$ of size $k$ that maximises
$$\Delta_{(\varphi, \cT, \ell)}(Y)=PD_{(\cT, \ell)}(Y)-\sum_{x\in Y} \varphi_{(\cT, \ell)}(x)$$
across all edge weightings $\ell'$ of $\cT$ with $L(\cT, \ell')=m$ and all subsets of $X$ of size $k$ takes time $O\left(|X|^3\right)$.
\end{theorem}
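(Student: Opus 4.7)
The plan is to combine Corollary~\ref{single-edge} with the algorithm {\sc MaxDiversityDiff} from Theorem~\ref{poly-time}. By Corollary~\ref{single-edge}, among all edge weightings $\ell'$ of $\cT$ with $L(\cT, \ell')=m$ and all subsets $Y\subseteq X$ of size $k$, there is an optimal pair $(\ell, Y)$ in which the entire budget $m$ is concentrated on a single edge, that is, $\ell(e^*)=m$ for some edge $e^*$ and $\ell(f)=0$ for all other edges $f$. Thus it suffices to search over edge weightings of this restricted form.

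First, I would enumerate every edge $e$ of $\cT$ and, for each such $e$, construct the edge-weighted phylogenetic $X$-tree $(\cT, \ell_e)$, where $\ell_e(e)=m$ and $\ell_e(f)=0$ for all $f\neq e$. There are $2|X|-2$ edges in $\cT$, and each $(\cT, \ell_e)$ can be built in $O(|X|)$ time. Second, for each $e$, I would run {\sc MaxDiversityDiff} on the input $(\cT, \ell_e)$, $\varphi$, and $k$; by Theorem~\ref{poly-time}, this returns (in time $O(|X|^2)$) a subset $Y_e\subseteq X$ of size $k$ maximising $\Delta_{(\varphi, \cT, \ell_e)}(Y)$ over all $k$-subsets. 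Third, I would return a pair $(\ell_{e^*}, Y_{e^*})$ achieving the maximum value of $\Delta_{(\varphi, \cT, \ell_e)}(Y_e)$ across the $2|X|-2$ candidates.

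Correctness follows directly from Corollary~\ref{single-edge}: an optimal $(\ell, Y)$ concentrated on some edge $e^*$ exists, and the above search explicitly considers that edge among the $O(|X|)$ candidates; for that edge, {\sc MaxDiversityDiff} finds an optimal $k$-subset. For the running time, the outer loop makes $O(|X|)$ iterations, each dominated by the $O(|X|^2)$ call to {\sc MaxDiversityDiff}, giving a total of $O(|X|^3)$.

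There is no real obstacle here, since the two preceding results do all the work; the theorem is essentially a bookkeeping statement that combines the structural reduction in Corollary~\ref{single-edge} with the per-instance algorithm of Theorem~\ref{poly-time}. The only minor point to verify is that constructing each $(\cT, \ell_e)$ and initialising the algorithm does not blow up the stated complexity, which is immediate since each such construction is linear in $|X|$.
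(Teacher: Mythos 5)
Your proposal is correct and matches the paper's own argument essentially verbatim: the paper likewise invokes Corollary~\ref{single-edge} to restrict attention to the $2|X|-2$ weightings $\ell_e$ that concentrate the full budget $m$ on a single edge, runs {\sc MaxDiversityDiff} (Theorem~\ref{poly-time}) on each $(\cT, \ell_e)$ in $O\left(|X|^2\right)$ time, and takes the best of the returned values, giving $O\left(|X|^3\right)$ overall.
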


\subsection{Maximum edge length}

Next consider the problem of maximising the diversity difference on a fixed phylogenetic $X$-tree $\cT$ whose edge lengths are constrained by some maximum value, say $\ell_{\max}$. We begin with a lemma that reduces the problem to an earlier problem.

\begin{lemma}
Let $\cT$ be a phylogenetic $X$-tree, let $Y\subseteq X$, and let $\ell_{\max}$ be a non-negative real. If $\ell: E\rightarrow {\mathbb R}^{\ge 0}$ is a map on the set $E$ of edges of $\cT$ such that $\ell(e)\le \ell_{\max}$ for all $e\in E$, and $\varphi$ is a descendant diversity index, then
$$\Delta_{(\varphi, \cT, \ell_{\max})}(Y) \ge \Delta_{(\varphi,\cT, \ell)}(Y),$$
where $\ell_{\max}$ is the map $\ell_{\max}: E\rightarrow \mathbb R^{\ge 0}$ defined by $\ell_{\max}(e)=\ell_{\max}$ for all $e\in E$.
\label{lmax}
\end{lemma}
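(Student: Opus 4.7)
The plan is to exploit the linearity of $\Delta_{(\varphi, \cT, \ell)}(Y)$ in the edge-length vector $\ell$ together with the fact that, for a descendant diversity index, the associated coefficients are non-negative and uniformly bounded by $1$.

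First, I would fix $\cT$, $Y \subseteq X$, and $\varphi$, and use the edge-contribution decomposition developed in the Preliminaries:
\begin{equation*}
\Delta_{(\varphi, \cT, \ell)}(Y) \;=\; \sum_{e=(u,v)\in E} \lambda_{(\cT,\ell)}(e) \;=\; \sum_{e=(u,v)\in E} \ell(e)\cdot c_e,
\end{equation*}
where $c_e = 1 - \sum_{y\in Y\cap C(v)} \gamma_{(\cT,\ell)}(y,e)$ if $Y\cap C(v)\ne\emptyset$ and $c_e = 0$ otherwise. The key observation is that because $\varphi$ is a \emph{descendant} index, $\gamma_{(\cT,\ell)}(x,e)$ is non-zero only for edges $e=(u,v)$ on the path $P(\cT;\rho,x)$, equivalently only when $x\in C(v)$. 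Combining this with (\ref{sum1}) and non-negativity gives
\begin{equation*}
0 \;\le\; \sum_{y\in Y\cap C(v)} \gamma_{(\cT,\ell)}(y,e) \;\le\; \sum_{x\in C(v)} \gamma_{(\cT,\ell)}(x,e) \;=\; \sum_{x\in X} \gamma_{(\cT,\ell)}(x,e) \;=\; 1,
\end{equation*}
so $c_e \in [0,1]$ for every edge $e$.

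Next I would observe that the coefficient $c_e$ depends only on $\cT$, $Y$, and $\varphi$ (the $\gamma$ values for the descendant indices of interest here are purely a function of tree structure, as is explicit in the formulas defining FP and ES), so the same constants appear in the decomposition of both $\Delta_{(\varphi,\cT,\ell)}(Y)$ and $\Delta_{(\varphi,\cT,\ell_{\max})}(Y)$. Since $\ell(e)\le \ell_{\max}$ for every edge $e$ and $c_e\ge 0$, we get $\ell(e)\,c_e \le \ell_{\max}\cdot c_e$ edgewise. Summing over $e\in E$ yields
\begin{equation*}
\Delta_{(\varphi,\cT,\ell)}(Y) \;=\; \sum_{e\in E} \ell(e)\,c_e \;\le\; \sum_{e\in E} \ell_{\max}\cdot c_e \;=\; \Delta_{(\varphi,\cT,\ell_{\max})}(Y),
\end{equation*}
which is the required inequality.

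The only subtle point, and the step I would be most careful about, is confirming that the coefficients $c_e$ are genuinely the same for the weightings $\ell$ and $\ell_{\max}$. The paper's definition only explicitly guarantees $\gamma_{(\cT,\ell)}(x,e)$ is independent of $\ell(e)$; however, for descendant diversity indices as used throughout the paper (in particular FP and ES), the $\gamma$-values are determined by $\cT$ alone, which is what the argument above requires. So the proof really amounts to the three lines: write $\Delta$ as a non-negative linear combination of edge lengths with coefficients independent of $\ell$, bound each edge length by $\ell_{\max}$, and sum.
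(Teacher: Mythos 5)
Your proof is correct and follows essentially the same route as the paper's: the paper likewise decomposes the diversity difference into edge contributions $\lambda_{(\cT,\ell)}(e)=\ell(e)\bigl(1-\sum_{y\in Y\cap C(v)}\gamma_{(\cT,\ell)}(y,e)\bigr)$, notes the coefficient is non-negative, and compares edgewise with $\ell_{\max}$ before summing. The subtlety you flag --- that the $\gamma$-coefficients are the same under $\ell$ and $\ell_{\max}$ --- is silently assumed in the paper (its expression for $\lambda_{(\cT,\ell_{\max})}(e)$ reuses $\gamma_{(\cT,\ell)}$), so your explicit justification is a harmless strengthening rather than a divergence.
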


\begin{proof}
Let $e=(u, v)\in E$, and let $\lambda_{(\cT, \ell)}(e)$ and $\lambda_{(\cT, \ell_{\max})}(e)$ denote the contribution of $e$ to $\Delta_{(\varphi, \cT, \ell)}(Y)$ and $\Delta_{(\varphi, \cT, \ell_{\max})}(Y)$, respectively. To prove the lemma, it suffices to show that $\lambda_{(\cT, \ell)}(e)\le \lambda_{(\cT, \ell_{\max})}(e)$.

Since $\ell(e)\le \ell_{\max}$ and $\varphi$ is linear,
\begin{align*}
\lambda_{(\cT, \ell)}(e) & = \ell(e)-\sum_{y\in Y\cap C(v)} \gamma_{(\cT, \ell)}(y, e)\cdot \ell(e) \\
& = \ell(e)\cdot \left(1-\sum_{y\in Y\cap C(v)} \gamma_{(\cT, \ell)}(y, e)\right) \\
& \le \ell_{\max}\cdot \left(1-\sum_{y\in Y\cap C(v)} \gamma_{(\cT, \ell)}(y, e)\right) \\
& = \lambda_{(\cT, \ell_{\max})}(e),
\end{align*}
thereby completing the proof of the lemma.
\end{proof}

By Lemma~\ref{lmax}, we may assume that all edges of $\cT$ have length $\ell_{\max}$. Thus the problem is reduced to maximising the diversity difference on a fixed phylogenetic tree with fixed edge lengths. Therefore, we can find the maximum value in time $O\left(|X|^2\right)$ by applying Theorem~\ref{poly-time} and, in particular, {\sc MaxDiversityDiff}, to $(\cT, \ell_{\max})$, a non-negative integer $k$, and a descendant diversity index $\varphi$.

\section{Maximising the Diversity Difference Across All Trees}
\label{across-trees}

In contrast to the computational results in last section, in this section we characterise the edge-weighted phylogenetic $X$-trees that maximise the diversity difference for FP and ES. If the size of the subset of interest is $|X|$, the diversity difference is zero for all edge-weighted phylogenetic $X$-trees. Thus, throughout this section, we will impose the condition that $k\le |X|-1$.

\subsection{Maximum sum of edge lengths}

We first consider the problem of maximising the diversity difference across all edge-weighted phylogenetic trees whose edge lengths are constrained by their (total) sum. We begin with the Fair Proportion index.

\begin{theorem}
Let $k$ be a positive integer such that $k\le |X|-1$, and let $m$ be a positive real. Then an edge-weighted phylogenetic $X$-tree $(\cT, \ell)$ with $L(\cT, \ell)=m$ and a subset $Y$ of $X$ of size $k$ maximises
$$\Delta_{(FP, \cT, \ell)}(Y)=PD_{(\cT, \ell)}(Y)-\sum_{x\in Y} FP_{(\cT, \ell)}(x)$$
across all edge-weighted phylogenetic $X$-trees $(\cT', \ell')$ with $L(\cT', \ell')=m$ and all subsets of $X$ of size $k$ if and only if $(\cT, \ell)$ has an edge $e=(u, v)$ such that $\ell(e)=m$, $|Y\cap C(v)|=1$, and $X-Y\subseteq C(v)$, in which case,
$$\Delta_{(FP, \cT, \ell)}(Y)=m\left(1-\frac{1}{n-k+1}\right),$$
where $n=|X|$.
\label{across-sum-fp}
\end{theorem}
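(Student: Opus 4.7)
The plan is to combine Corollary~\ref{single-edge} with an optimisation argument specific to the FP index.

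By Corollary~\ref{single-edge}, some maximum is attained by an edge-weighting $\ell$ that concentrates all of the mass $m$ on a single edge $e^*=(u,v)$ of $\cT$; then only $e^*$ contributes to $\Delta$. Since for FP we have $\gamma_{(\cT,\ell)}(y,e^*)=1/n(e^*)$ for every $y\in C(v)$,
$$\Delta_{(FP,\cT,\ell)}(Y)=m\left(1-\frac{|Y\cap C(v)|}{n(e^*)}\right)$$
whenever $Y\cap C(v)\neq\emptyset$ (and is $0$ otherwise). Setting $j=|Y\cap C(v)|$ and $c=n(e^*)=|C(v)|$, the constraints $1\le j\le k$ together with $k-j\le n-c$ (so that the $k-j$ elements of $Y\setminus C(v)$ fit into $X\setminus C(v)$) give $c\le n-k+j$. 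For fixed $j$, $m(1-j/c)$ is increasing in $c$, so the optimum is $c=n-k+j$, and the resulting $m(1-j/(n-k+j))$ is decreasing in $j\ge 1$, so the optimum is $j=1$, yielding the claimed value $m(1-1/(n-k+1))$. The \emph{if} direction then follows, since the stated conditions force $|C(v)|=1+(n-k)=n-k+1$, and substitution gives the stated value.

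For the \emph{only if} direction I would argue \emph{without} first restricting to single-edge weightings, so as to characterise all maximisers. For any edge $e=(u,v)$ of $\cT$ with $Y\cap C(v)\neq\emptyset$, the preceding constraints applied to $e$ give $|Y\cap C(v)|/n(e)\ge 1/(n-k+1)$, hence $\lambda_{(\cT,\ell)}(e)\le \ell(e)\bigl(1-1/(n-k+1)\bigr)$; the same bound holds trivially when $Y\cap C(v)=\emptyset$, since then $\lambda_{(\cT,\ell)}(e)=0$. Summing over $e$ gives $\Delta_{(FP,\cT,\ell)}(Y)\le m(1-1/(n-k+1))$, with equality if and only if every positive-weight edge $e$ has $|Y\cap C(v)|=1$ and $|C(v)|=n-k+1$, and every edge with $Y\cap C(v)=\emptyset$ has $\ell(e)=0$. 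Since $\cT$ is binary, distinct vertices of $\cT$ determine distinct clusters, so any two clusters of size $n-k+1$ both containing the $(n-k)$-element set $X-Y$ must coincide; hence at most one edge of $\cT$ can satisfy the equality conditions, and it must carry the entire mass $m$.

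The main obstacle is the \emph{only if} direction: Corollary~\ref{single-edge} supplies \emph{a} maximiser concentrated on one edge but does not preclude maximisers that spread their weight across several edges. The cluster-uniqueness observation in a binary phylogenetic tree is what ultimately forces all of the mass onto the single edge whose lower endpoint has cluster containing $X-Y$ and exactly one element of $Y$.
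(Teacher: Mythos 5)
Your proof is correct, and it takes a genuinely different route from the paper's for the ``only if'' direction. The paper exploits optimality itself: Lemma~\ref{new-optimal}(i) (a weight-shifting exchange argument) forces all positive-weight edges of a maximiser to have equal ratios $|Y_e|/|X_e|$, and the paper then shows that having $t\ge 2$ positive-weight edges makes the total contribution strictly smaller than $m\left(1-\frac{1}{n-k+1}\right)$, a contradiction, so the mass sits on a single edge which is then optimised. You instead prove a \emph{universal} per-edge bound valid for every tree and every weighting, with no appeal to maximality: the counting constraint $|C(v)|\le (n-k)+|Y\cap C(v)|$ plus monotonicity in $j=|Y\cap C(v)|$ gives $\lambda_{(\cT,\ell)}(e)\le \ell(e)\left(1-\frac{1}{n-k+1}\right)$, and summing over edges yields the global bound at once; the characterisation then drops out of the equality analysis ($j=1$ and $|C(v)|=n-k+1$ on every positive-weight edge, since equality in $j/(n-k+j)\ge 1/(n-k+1)$ forces $j=1$). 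One small imprecision in your final step: injectivity of the vertex-to-cluster map is not by itself what makes two size-$(n-k+1)$ clusters containing $X-Y$ coincide; you also need that clusters of a rooted tree form a laminar family, so two clusters meeting in the nonempty set $X-Y$ (here $k\le n-1$ is used) are nested, and nested clusters of equal size are equal — after which injectivity identifies the edge. With that repaired, your argument is more elementary than the paper's: it renders Lemma~\ref{new-optimal} unnecessary, and indeed your opening appeal to Corollary~\ref{single-edge} is redundant, since the per-edge bound already gives the maximum value and the ``if'' computation gives attainment. What the paper's heavier machinery buys is reusability: the exchange lemma is index-agnostic and is applied again verbatim in the ES case (Theorem~\ref{across-sum-es}), whereas your bound is tailored to FP. Finally, a boundary case your proof shares with the paper's: for $k=1$ the equality conditions force $|C(v)|=n$, which no edge can satisfy since no edge is directed into the root, so the stated value is attained only for $k\ge 2$; this is a defect of the theorem statement rather than of your argument.
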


\begin{figure}
\center
\input{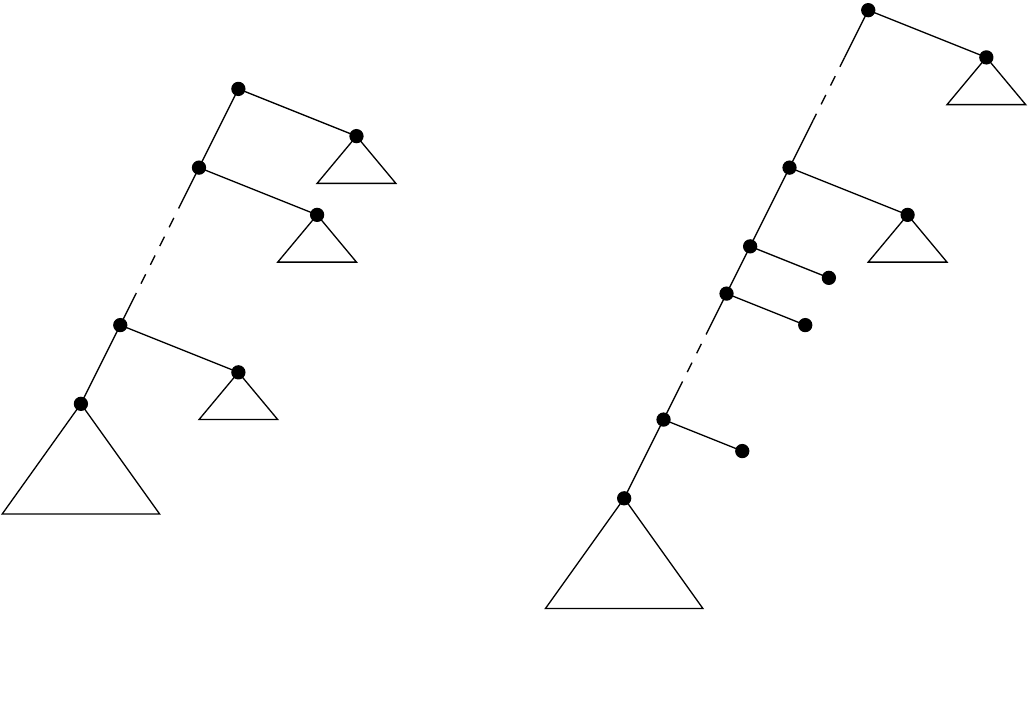_t}
\caption{(i) An illustration of Theorem~\ref{across-sum-fp}, where {$X-Y = Z$, $Z\cup \{y\}=C(v)$, and $Y = Y_1\cup Y_2\cup \cdots \cup Y_s\cup \{y\}$}. All edges have weight zero except the edge $(u, v)$ which has weight $m$. The ``triangles'' represent pendant subtrees whose leaf sets are $Y_1, Y_2, \ldots, Y_s, Z\cup\{y\}$.
(ii) An illustration of Theorem~\ref{across-sum-es},  where $X-Y=\{x_1, x_2, \ldots, x_t\}$, $Y=Y_1\cup Y_2\cup \cdots \cup Y_s$, and {$Y_s$ is non-empty}. All edges have weight zero except the edge $(u, v)$ which has weight $m$. The ``triangles'' represent pendant subsets whose leaf sets are $Y_1, Y_2, \ldots, Y_s$.
}
\label{sum-fig}
\end{figure}

To illustrate Theorem~\ref{across-sum-fp}, a generic edge-weighted phylogenetic $X$-tree optimising $\Delta_{(FP, \cT, \ell)}(Y)$ as in the statement of this theorem is shown in Fig.~\ref{sum-fig}(i). 

\begin{proof}[Proof of Theorem~\ref{across-sum-fp}]
First note that if $(\cT, \ell)$ and $Y$ satisfy the conditions described in the statement of the theorem, then it follows that
$$\Delta_{(FP, \cT, \ell)}(Y)=m\left(1-\frac{1}{|X|-|Y|+1}\right)=m\left(1-\frac{1}{n-k+1}\right).$$
To show that the only if direction holds, let $(\cT, \ell)$ be an edge-weighted phylogenetic $X$-tree, and suppose that $(\cT, \ell)$ together with a subset $Y$ of size $k$ maximises $\Delta_{(FP, \cT, \ell)}(Y)$ across all edge-weighted phylogenetic $X$-trees $(\cT', \ell')$ with $L(\cT', \ell')=m$ and subsets of $X$ of size $k$. 
For each edge $e=(u, v)$ of $\cT$, 
let $Y_e=Y\cap C(v)$ and, for convenience, $X_e=C(v)=X\cap C(v)$. Thus
$$\lambda_{(\cT, \ell)}(e)=\ell(e)\left(1-\frac{|Y_e|}{|X_e|}\right)$$
if $|Y_e|\neq 0$ and $\lambda_{(\cT, \ell)}(e)=0$ if $|Y_e|=0$. Let $e_1, e_2, \ldots, e_t$ denote the edges of $\cT$ in which $\lambda_{(\cT, \ell)}(e_i) > 0$. Note that, for such edges, $X_e-Y_e$ is non-empty. We next show that $t=1$ by showing that if $t\geq 2$, then
$$\sum_{i\in \{1, 2, \ldots, t\}} \lambda_{(\cT, \ell)}(e_i) < m\left(1-\frac{1}{|X|-|Y|+1}\right),$$
contradicting the maximality of $\Delta_{(FP, \cT, \ell)}(Y)$.

Suppose $t\ge 2$. Since FP is descendant, it follows by Lemma~\ref{new-optimal}(i) that
$$\frac{|Y_{e_i}|}{|X_{e_i}|}\ = \frac{|Y_{e_j}|}{|X_{e_j}|}$$
for all $i, j\in \{1, 2, \ldots, t\}$. Therefore
\begin{align*}
\sum_{i\in \{1, 2, \ldots, t\}} \lambda_{(\cT, \ell)}(e_i) & = \ell(e_1)\left(1-\frac{|Y_{e_1}|}{|X_{e_1}|}\right) + \cdots + \ell(e_t)\left(1-\frac{|Y_{e_t}|}{|X_{e_t}|}\right) \\
& = \ell(e_1)\left(1-\frac{|Y_{e_1}|}{|X_{e_1}|}\right) + \cdots + \ell(e_t)\left(1-\frac{|Y_{e_1}|}{|X_{e_1}|}\right) \\
& \le m\left(1-\frac{|Y_{e_1}|}{|X_{e_1}|}\right)\\ 
& < m\left(1-\frac{|Y_{e_1}|}{|X|-|Y|+|Y_{e_1}|}\right) \\
& \le m\left(1-\frac{1}{|X|-|Y|+1}\right),
\end{align*}
where the second-to-last inequality holds {as the size of $X_{e_1}$ can be no more than the sum of the sizes of $X-Y$ and $Y\cap Y_{e_1}=Y_{e_1}$.} Moreover, it is strict when $t\ge 2$ since (i) if there is no directed path containing $e_1$ and $e_2$, then $X_{e_1}$ and $X_{e_2}$ are disjoint and, in particular, $X-(X_{e_1}\cup Y)$ is non-empty and (ii) if, without loss of generality, there is a directed path from $e_2$ to $e_1$, then $X_{e_1}\subset X_{e_2}$ and, as
$$\frac{|Y_{e_1}|}{|X_{e_1}|}=\frac{|Y_{e_2}|}{|X_{e_2}|},$$
again $X-(X_{e_1}\cup Y)$ is non-empty.

Hence $t=1$ and so, to maximise the diversity difference, $\ell(e_1)=m$, $|Y_{e_1}|=1$, and $X-Y\subseteq X_{e_1}$, in which case,
$$\Delta_{(FP, \cT, \ell)}(Y)=m\left(1-\frac{1}{n-k+1}\right).$$
This completes the proof of the theorem.
\end{proof}

The next theorem is the analogue of Theorem~\ref{across-sum-fp} for the Equal Splits index.

\begin{theorem}
Let $k$ be a positive integer such that $k\le |X|-1$, and let $m$ be positive real. Then an edge-weight phylogenetic $X$-tree $(\cT, \ell)$ with $L(\cT, \ell)=m$ and a subset $Y$ of $X$ of size $k$ maximises
$$\Delta_{(ES, \cT, \ell)}(Y)=PD_{(\cT, \ell)}(Y)-\sum_{x\in Y} ES_{(\cT, \ell)}(x)$$
across all phylogenetic $X$-trees $(\cT', \ell')$ with $L(\cT', \ell')=m$ and all subsets of $X$ of size $k$ if and only if $(\cT, \ell)$ has an edge $e=(u, v)$ such that $\ell(e)=m$, $X-Y$ is a chain whose first parent is $v$, and $Y\cap C(v)$ is non-empty, in which case,
$$\Delta_{(ES, \cT, \ell)}(Y)=m\left(1-\frac{1}{2^{n-k}}\right),$$
where $n=|X|$.
\label{across-sum-es}
\end{theorem}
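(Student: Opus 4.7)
The plan is to mirror the argument of Theorem~\ref{across-sum-fp}: reduce to a single-edge weighting via Corollary~\ref{single-edge}, then isolate the contribution of that edge and optimise it over the placement of $X-Y$ in $\cT$. For the sufficiency direction ($\Leftarrow$), I would verify by direct calculation that the described configuration achieves $\Delta = m(1-1/2^{n-k})$. Writing $d(w,z)$ for the number of edges on the directed path from $w$ to $z$ in $\cT$, so that $\Pi(e,y)=2^{d(v,y)}$ for $e=(u,v)$, with $\ell(e)=m$ and $\ell$ vanishing elsewhere we have
$$\Delta_{(ES,\cT,\ell)}(Y) = \lambda_{(\cT,\ell)}(e) = m\left(1 - \sum_{y \in Y \cap C(v)} \frac{1}{2^{d(v,y)}}\right).$$
A routine induction on the size of any binary rooted subtree gives the Kraft-type identity $\sum_{z \in C(v)} 1/2^{d(v,z)} = 1$, so the right-hand side equals $m\sum_{z \in (X-Y) \cap C(v)} 1/2^{d(v,z)}$. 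When $X-Y \subseteq C(v)$ is a chain with first parent $v$, the $n-k$ leaves of $X-Y$ sit at depths $1, 2, \ldots, n-k$ from $v$, giving the geometric sum $1-1/2^{n-k}$.

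For the necessity direction ($\Rightarrow$), suppose $(\cT,\ell)$ and $Y$ achieve the maximum. By Corollary~\ref{single-edge}, I may pass to an equivalent optimum in which a single edge $e = (u,v)$ carries weight $m$ and all other edges vanish; that this transformation can be reversed without losing the structural conclusions will follow from Lemma~\ref{new-optimal}(i) and the uniqueness of the first parent of a chain, discussed at the end. Since $n-k \geq 1$ forces $\Delta > 0$, we must have $Y \cap C(v) \neq \emptyset$, and the problem reduces to maximising
$$f(v,\cT) := \sum_{z \in (X-Y) \cap C(v)} \frac{1}{2^{d(v,z)}}$$
over all phylogenetic $X$-trees and non-root vertices $v$ with $Y \cap C(v) \neq \emptyset$.

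The crux is the following structural claim, which I would prove by induction on the number of leaves of the subtree: if $\cT'$ is a binary rooted tree with root $v$ and leaf set partitioned into a non-empty $A$ and a set $B$, then
$$\sum_{z \in B} \frac{1}{2^{d(v,z)}} \leq 1 - \frac{1}{2^{|B|}},$$
with equality if and only if $B$ is a chain of $\cT'$ whose first parent is $v$. The inductive step splits on the two children $v_1,v_2$ of $v$. If $A \subseteq C(v_1)$, then the induction hypothesis on the $v_1$-side together with $\sum_{z\in C(v_2)} 1/2^{d(v_2,z)}=1$ gives
$$\sum_{z \in B}\frac{1}{2^{d(v,z)}} = \frac{1}{2}\sum_{z \in B\cap C(v_1)} \frac{1}{2^{d(v_1,z)}} + \frac{1}{2} \leq 1 - \frac{1}{2^{|B\cap C(v_1)|+1}} \leq 1 - \frac{1}{2^{|B|}},$$
with equality throughout forcing $|C(v_2)|=1$ (that is, $v$ has a pendant child $x \in X-Y$) and the corresponding equality on the $v_1$-side, which glue into a chain of $B$ with first parent $v$. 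If instead $A$ meets both subtrees, a similar recursion combined with the elementary inequality $1/2^{b_1+1} + 1/2^{b_2+1} > 1/2^{b_1+b_2}$ (valid whenever $b_1+b_2\geq 1$) yields a strict inequality. Combining this with $|B|\leq n-k$ and the monotonicity of $1-1/2^b$ in $b$, equality in the overall bound $\Delta \leq m(1-1/2^{n-k})$ forces both $X-Y \subseteq C(v)$ and the chain-first-parent-$v$ structure.

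The main subtlety I anticipate is interfacing the induction with the chain definition, which has a cherry-base special case. That case is automatically ruled out at the optimum: a cherry base at $v$ would require $C(v)\subseteq X-Y$, hence $Y\cap C(v)=\emptyset$ and $\lambda(e)=0$, contradicting optimality. The same observation---that the first parent of a chain in $\cT$ is uniquely determined---also lets me reverse the Corollary~\ref{single-edge} reduction: by Lemma~\ref{new-optimal}(i), any two positive-weight edges at an optimum have equal values of $\sum_{y\in Y\cap C(v_i)} 1/2^{d(v_i,y)}$, and together with the tight bound above this forces both head-vertices $v_i$ to coincide with the first parent of the chain $X-Y$. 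Hence any optimal $(\cT,\ell)$ has $\ell(e)=m$ on a single edge $e$, exactly as claimed.
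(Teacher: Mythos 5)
Your proposal is correct, and while its skeleton (single-edge reduction, per-edge bound, chain equality case) matches the paper's, the two key steps are carried out by genuinely different arguments. Where the paper proves its central Lemma~\ref{chain} by an exchange argument --- repeatedly pruning a leaf $z\in C(v)-Y$ and regrafting it to the edge $e$, showing the sum $\sum_{y\in Y\cap C(v)}2^{-|P(\cT;v,y)|}$ strictly decreases until the chain form is reached --- you prove the equivalent sharp bound $\sum_{z\in B}2^{-d(v,z)}\le 1-2^{-|B|}$ directly by induction on the subtree, using the Kraft identity $\sum_{z\in C(v)}2^{-d(v,z)}=1$ and the elementary inequality $2^{-(b_1+1)}+2^{-(b_2+1)}>2^{-(b_1+b_2)}$; this buys you a closed-form bound with a built-in equality characterisation, rather than extracting the equality case from the strictness of the exchange moves. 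Second, to rule out two or more positive-weight edges, the paper invokes Lemma~\ref{new-optimal}(i) and then argues by cases (disjoint versus nested clusters $C(v_1), C(v_2)$) that $X-(C(v_1)\cup Y)$ is non-empty, yielding a strict loss; your argument is cleaner: per-edge tightness in your bound forces each positive-weight edge's head to be the first parent of the chain $X-Y$, and since that vertex is unique and distinct edges of a tree have distinct heads, the weight concentrates on a single edge. (Indeed, with your per-edge bound in hand, the appeal to Corollary~\ref{single-edge} is essentially redundant.) Two small points you share with, or handle at least as carefully as, the paper: the degenerate case $k=n-1$, where $|X-Y|=1$ is not formally a chain under the paper's definition ($m\ge 2$), is glossed over in both treatments; and you correctly observe that the cherry-variant of a chain is excluded at the optimum because it would force $Y\cap C(v)=\emptyset$ and hence $\lambda_{(\cT,\ell)}(e)=0$.
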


A generic edge-weighted phylogenetic $X$-tree optimising $\Delta_{(ES, \cT, \ell)}(Y)$ as in the statement of Theorem~\ref{across-sum-es} is shown in Fig.~\ref{sum-fig}(ii).
The proof of Theorem~\ref{across-sum-es} takes a similar approach to that of Theorem~\ref{across-sum-fp} but is slightly more involved. We begin with a lemma.

\begin{lemma}
Let $\cT$ be a phylogenetic $X$-tree, and let $Y\subseteq X$. Let $e=(u, v)$ be an edge of $\cT$, and suppose that $(X-Y)\cap C(v)$ and $Y\cap C(v)$ are both non-empty. Let $\cT'$ be the phylogenetic $X$-tree obtained from $\cT$ by replacing the pendant subtree $\cT|C(v)$ with a phylogenetic tree on leaf set $C(v)$, where $C(v)-Y$ is a chain whose first parent is $v$ and $C(v)\cap Y$ is a pendant subtree below the chain. If $\cT$ is not of this form, then
$$\frac{1}{2^{n_1-k_1}}=\sum_{y\in Y\cap C(v)} \frac{1}{2^{|P(\cT'; v, y)|}} < \sum_{y\in Y\cap C(v)} \frac{1}{2^{|P(\cT; v, y)|}},$$
where $n_1=|C(v)|$ and $k_1=|Y\cap C(v)|$.
\label{chain}
\end{lemma}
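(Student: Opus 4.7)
The plan is to handle the stated equality by a direct computation and then to establish the strict inequality by induction in a slightly strengthened form so that the recursion at the children of $v$ is clean. For the equality I would use the standard identity that in any rooted binary phylogenetic tree with root $r$ and leaf set $L$, $\sum_{\ell\in L}1/2^{|P(r,\ell)|}=1$. In $\cT'$, every directed path from $v$ to $y\in Y\cap C(v)$ decomposes into the $n_1-k_1$ chain edges from $v$ down to the root $r'$ of the pendant subtree on $Y\cap C(v)$, followed by the path from $r'$ to $y$; pulling out the common factor $1/2^{n_1-k_1}$ and applying the identity to that subtree gives
$$\sum_{y\in Y\cap C(v)}\frac{1}{2^{|P(\cT';v,y)|}}=\frac{1}{2^{n_1-k_1}}.$$

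For the strict inequality I would prove the following strengthening by induction on $|L|$: for any rooted binary phylogenetic tree $\cT_0$ with root $r$ and leaf set $L$, and any non-empty $A\subseteq L$, the sum $\sigma=\sum_{a\in A}1/2^{|P(\cT_0;r,a)|}$ satisfies $\sigma\geq 1/2^{|L|-|A|}$, with equality if and only if $L-A$ is a chain in $\cT_0$ with first parent $r$ (the case $A=L$ being the empty chain). The lemma is the instance $\cT_0=\cT|C(v)$, $r=v$, $A=Y\cap C(v)$, with both $A$ and $L-A$ non-empty by hypothesis. For the inductive step, let the children of $r$ be $v_1,v_2$, set $L_i=C(v_i)$, $A_i=A\cap L_i$, $B_i=L_i-A_i$, and write $\sigma=\tfrac{1}{2}\sigma_1+\tfrac{1}{2}\sigma_2$ with $\sigma_i=\sum_{a\in A_i}1/2^{|P(v_i,a)|}$, using the conventions $\sigma_i=0$ when $A_i=\emptyset$ and $\sigma_i=1$ when $B_i=\emptyset$.

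The case analysis then splits on which of $A_1,A_2$ are empty. If exactly one, say $A_1$, is empty then $L_1\subseteq L-A$ and the inductive hypothesis applied at $v_2$ gives $\sigma_2\geq 1/2^{|B_2|}$; hence $\sigma\geq 1/2^{|B_2|+1}\geq 1/2^{|B_1|+|B_2|}$, and equality forces both $|B_1|=1$ (so $v_1$ is a leaf in $L-A$) and the subtree at $v_2$ to be of the required form, which together glue into a chain in $L-A$ with first parent $r$ and $A$ as pendant subtree below. If both $A_1,A_2$ are non-empty, the inductive hypothesis on each subtree yields $\sigma\geq 1/2^{|B_1|+1}+1/2^{|B_2|+1}$, and since $L-A\neq\emptyset$ forces at least one of $|B_1|,|B_2|$ to be positive, a short arithmetic comparison with $1/2^{|B_1|+|B_2|}$ gives strict inequality; this matches the fact that in this case no child of $r$ is a leaf in $L-A$, so $\cT_0$ cannot be of the required form. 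The main subtlety is the equality analysis in the first case, verifying that the chain structure on the subtree at $v_2$ together with $v_1$ being a leaf in $L-A$ lifts to a genuine chain with first parent $r$, and that the arithmetic in the second case remains strict across the sub-cases in which one of $B_1,B_2$ is empty.
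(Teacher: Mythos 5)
Your proposal is correct, but it reaches the conclusion by a genuinely different route from the paper. For the strict inequality the paper uses a local improvement (exchange) argument: if $\cT|C(v)$ is not of the chain-plus-pendant-subtree form, it exhibits a leaf $z\in C(v)-Y$ that is either in a cherry with another leaf of $C(v)-Y$, or is a leaf of a pendant subtree whose two maximal subtrees each contain an element of $Y$; pruning $z$ and regrafting it to the edge $e=(u,v)$ weakly increases $|P(\cT;v,y)|$ for every $y\in Y\cap C(v)$ and strictly increases it for at least one such $y$, so $\sum_{y\in Y\cap C(v)}2^{-|P(\cT;v,y)|}$ strictly decreases, and iterating this process terminates at the extremal form $\cT'$, whose value is computed directly. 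You instead prove, by induction on the number of leaves via the decomposition at the two children of the root, the sharp bound $\sum_{a\in A}2^{-|P(r,a)|}\ge 2^{-(|L|-|A|)}$ for every non-empty $A\subseteq L$, together with the characterisation that equality holds exactly when $L-A$ is a chain with first parent $r$; the lemma is the instance $A=Y\cap C(v)$, and your Kraft-type identity $\sum_{\ell\in L}2^{-|P(r,\ell)|}=1$ gives the stated value for $\cT'$ just as intended. I checked your case analysis and arithmetic (in the two-non-empty case, $2^{b_1}+2^{b_2}>2$ whenever $b_1+b_2\ge 1$, with $b_i=|B_i|$) and it is sound, including the degenerate conventions $\sigma_i=0$ when $A_i=\emptyset$ and $\sigma_i=1$ when $B_i=\emptyset$. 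What each approach buys: the paper's rearrangement argument is constructive and stylistically matches its other proofs, which also proceed by subtree-prune-and-regraft moves, but it needs an implicit termination argument; your induction is self-contained, yields the minimum value and the uniqueness of the minimising shape in a single pass, and strengthens the statement to an if-and-only-if. One point worth making explicit in a final write-up: since the tree is binary and $A$ is non-empty, the equality condition ``$L-A$ is a chain with first parent $r$'' automatically forces $A$ to be a pendant subtree below the last parent of the chain (a chain ending in a cherry of $L-A$ would leave no room for $A$), so your equality characterisation does coincide with the form demanded in the lemma.
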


\begin{proof}
Suppose that $\cT$ is not of the same form as $\cT'$. Then there is a $z\in C(v)-Y$ such that either $z$ is in a cherry $\{z, z'\}$ of $\cT$, where $z'\not\in Y$ or $z$ is a leaf of a pendant subtree of $\cT$ whose two maximal subtrees each contain an element of $Y$. Let $\cT''$ be the phylogenetic $X$-tree that is obtained from $\cT$ by pruning $z$ and regrafting to the edge $e$. Relabel the vertex $v$ as $v'$ and the newly created vertex in the subdivided edge $v$ (so that $C(v)$ is unchanged). It is now easily checked that if $y\in Y\cap C(v)$, then $|P(\cT; v, y)|\le |P(\cT''; v, y)|$. Moreover, for some $y'\in Y\cap C(v)$, we have $|P(\cT; v, y')| < |P(\cT''; v, y')|$. In particular,
$$\sum_{y\in Y\cap C(v)} \frac{1}{2^{|P(\cT''; v, y)|}} < \sum_{y\in Y\cap C(v)} \frac{1}{2^{|P(\cT; v, y)|}}.$$
The lemma now follows by repeating this process until we have constructed a phylogenetic tree in the same form as $\cT'$.
\end{proof}

\begin{proof}[Proof of Theorem~\ref{across-sum-es}]
{If $(\cT, \ell)$ and $Y$ satisfy the conditions described in the statement of Theorem~\ref{across-sum-es}, then it is easily checked that
$$\Delta_{(ES, \cT, \ell)}(Y)=m\left(1-\frac{1}{2^{n-k}}\right).$$
To complete the proof}, let $(\cT, \ell)$ be an edge-weighted phylogenetic $X$-tree, and suppose that $(\cT, \ell)$ together with a subset $Y$ of $X$ of size $k$ maximises $\Delta_{(ES, \cT, \ell)}(Y)$ across all edge-weighted phylogenetic $X$-trees $(\cT', \ell')$ with $L(\cT', \ell')=m$ and subsets of $X$ of size $k$. For each edge $e=(u, v)$ of $\cT$, define $\lambda_{(\cT, \ell)}(e)$ to be the contribution of $e$ to $\Delta_{(ES, \cT, \ell)}(Y)$. Thus
$$\lambda_{(\cT, \ell)}(e)=\ell(e)\left(1-\sum_{y\in Y\cap C(v)} \frac{1}{2^{|P(\cT; v, y)|}}\right)$$
if $|Y\cap C(v)|\neq 0$ and $\lambda_{(\cT, \ell)}(e)=0$ if $|Y\cap C(v)|=0$. Note that, if $\lambda_{(\cT, \ell)}(e) > 0$, then $C(v)-Y$ is non-empty. Let $e_1=(u_1, v_1), e_2=(u_2, v_2), \ldots, e_t=(u_t, v_t)$ denote the edges of $\cT$ in which $\lambda_{(\cT, \ell)}(e_i)>0$.
We next show that $t=1$ by showing that if $t>1$ then 
$$m\left(1-\frac{1}{2^{n-k}}\right) > \sum_{i\in \{1, 2, \ldots, t\}} \lambda_{(\cT, \ell)}(e_i).$$

Say $t\ge 2$. Since ES is descendant, it follows by Lemma~\ref{new-optimal}(i) that
\begin{align}
\sum_{y\in Y\cap C(v_i)} \frac{1}{2^{|P(\cT; v_i, y)|}} = \sum_{y\in Y\cap C(v_j)} \frac{1}{2^{|P(\cT; v_j, y)|}}
\label{equality}
\end{align}
for all $i, j\in \{1, 2, \ldots, t\}$. Hence
\begin{align*}
\sum_{i\in \{1, 2, \ldots, t\}} \lambda_{(\cT, \ell)}(e_i)
& = \ell(e_1)\left(1-\sum_{y\in Y\cap C(v_1)} \frac{1}{2^{|P(\cT; v_1, y)|}}\right) + \cdots \\
& \qquad \qquad \qquad \qquad \cdots + \ell(e_t)\left(1-\sum_{y\in Y\cap C(v_t)} \frac{1}{2^{|P(\cT; v_t, y)|}}\right) \\
& = \ell(e_1)\left(1-\sum_{y\in Y\cap C(v_1)} \frac{1}{2^{|P(\cT; v_1, y)|}}\right) + \cdots \\
& \qquad \qquad \qquad \qquad \cdots + \ell(e_t)\left(1-\sum_{y\in Y\cap C(v_1)} \frac{1}{2^{|P(\cT; v_1, y)|}}\right) \\
& \le m\left(1-\sum_{y\in Y\cap C(v_1)} \frac{1}{2^{|P(\cT; v_1, y)|}}\right) \\
& \le m\left(1-\frac{1}{2^{n_1-k_1}}\right),
\end{align*}
where $n_1=|C(v_1)|$ and $k_1=|Y\cap C(v_1)|$, and the last inequality holds by Lemma~\ref{chain}. Since $t\ge 2$, it follows by (\ref{equality}) that $X-(C(v_1)\cup Y)$ is non-empty, so $n-n_1>0$. That is,
$$\sum_{i\in \{1, 2, \ldots, t\}} \lambda_{(\cT, \ell)}(e_i)\le m\left(1-\frac{1}{2^{n_1-k_1}}\right) < m\left(1-\frac{1}{2^{n-k_1}}\right).$$
Hence $t=1$, and so to maximise the diversity difference, a single edge $e=(u, v)$ say of $(\cT, \ell)$ has length $\ell(e)=m$ and, by Lemma~\ref{chain}, $X-Y$ is a chain with first parent $v$, and $Y\cap C(v)$ is non-empty, in which case,
$$\Delta_{(ES, \cT, \ell)}(Y)=m\left(1-\frac{1}{2^{n-k}}\right).$$
This completes the proof of the theorem.
\end{proof}

\subsection{Maximum edge length}

We next consider the problem of maximising the diversity difference across all phylogenetic trees whose edge lengths are constrained by some maximum value. By Lemma~\ref{lmax}, we may assume all edges of $\cT$ have the same weight. Thus, without loss of generality, we will assume that all edges have weight~$1$ and so, for simplicity, we write $\cT$ for $(\cT, \ell)$. {To this end, for a descendant diversity index $\varphi$, the contribution of an edge $e$ of $\cT$ to $\Delta_{(\varphi, \cT)}(Y)$ is denoted by $\lambda_{\cT}(e)$.} We begin with two lemmas and a corollary.

\begin{lemma}
Let $\cT$ be a phylogenetic $X$-tree, let $Y\subseteq X$ such that $|Y|\le |X|-1$, and let $\varphi\in \{FP, ES\}$. If $C$ is a non-trivial cluster of $\cT$ such that either $C\subseteq Y$ or $C\subseteq X-Y$, then there exists a phylogenetic $X$-tree $\cT'$ such that
$$\Delta_{(\varphi, \cT')}(Y) > \Delta_{(\varphi, \cT)}(Y).$$
\label{bichromatic1}
\end{lemma}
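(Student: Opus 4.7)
The plan is to construct $\cT'$ via a single rooted subtree prune and regraft and verify the strict inequality by comparing edge contributions. By symmetry I treat only the case $C\subseteq X-Y$; the case $C\subseteq Y$ is analogous. The lemma is meaningful only when $Y\neq\emptyset$ (else $\Delta_{(\varphi,\cT)}(Y)=0$ for every $\cT$), so fix $y\in Y$, and fix any leaf $z\in C$, which exists since $|C|\geq 2$. Let $v^*$ be the vertex with $C(v^*)=C$.

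Define $\cT'$ by pruning $y$ from $\cT$ and regrafting it as a sibling of $z$, so that $\{y,z\}$ becomes a cherry of $\cT'$ with new parent $w$. The advantage of this specific choice is that it works uniformly for both FP and ES: the edge above $w$ has cluster $\{y,z\}$, $|Y\cap\{y,z\}|=1$, and $\gamma_{(\cT',\ell)}(y,\cdot)=\tfrac{1}{2}$ under both indices, contributing exactly $\tfrac{1}{2}$ to $\Delta_{(\varphi,\cT')}(Y)$. Moreover, in $\cT'$ every edge on the path from $v^*$ down to $w$, and every edge on the path from the mrca $p_m$ of $y$ and $v^*$ in $\cT$ down to $v^*$, has $y$ added to (or is new with) a bichromatic cluster, whereas in $\cT$ these edges either lived in the monochromatic pendant subtree $\cT|C$ or had their clusters altered only by the addition or removal of the single element $y$.

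I would partition the edges of $\cT$ and $\cT'$ into: (i) edges unaffected by the SPR, with unchanged contribution; (ii) edges strictly inside $\cT|C$ off the $v^*$-to-$z$ path, still contributing $0$ in both trees; (iii) edges on the path from $p_m$ down to $w$ in $\cT'$ (including the new cherry edge), whose contributions will be compared to those of the edges they replace in $\cT$; and (iv) the restructured edges around $y$'s old parent $q$, where $q$ is suppressed and one edge of $\cT$ is genuinely removed. The argument reduces to showing that the total on (iii) and (iv) strictly increases.

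The main obstacle I anticipate is the detailed accounting on (iii) and (iv). For FP it should follow from the identity $\tfrac{a+1}{b+1}-\tfrac{a}{b}=\tfrac{b-a}{b(b+1)}$ together with a careful comparison of cluster sizes on the path from $p_m$ to $v^*$, exploiting that the edge $(p_{\ell-1},v^*)$ (where $p_{\ell-1}$ is $v^*$'s parent in $\cT$) already contributes $1-\tfrac{1}{|C|+1}\geq\tfrac{2}{3}$ in $\cT'$ while contributing $0$ in $\cT$; the existence of this edge is guaranteed because $p_m\neq v^*$, since $y\in Y$ and $C(v^*)=C\subseteq X-Y$ forces $y$ to be outside the subtree at $v^*$. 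For ES an analogous argument uses the geometric factor $2^{-|P(\cT';v,y)|}$ and the fact that $y$ now sits one edge deeper than in the FP-minimal comparison. A small case distinction handles the degenerate situation $q=\rho$, where suppressing $q$ effectively reroots the tree at $\rho$'s remaining child; the book-keeping is the same up to noting that the "removed" edge above $q$ does not exist in that case.
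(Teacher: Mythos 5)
There is a genuine gap: the specific SPR you propose does not always increase the diversity difference, so the strict inequality your proof hinges on is false. The failure is in the edges you implicitly treat as harmless, namely those on the path from $y$'s old parent $q$ up to the mrca $p_m$ whose \emph{only} $Y$-descendant in $\cT$ was $y$: in $\cT$ each such edge contributes $1-\tfrac{1}{|X_e|}$ (FP) or $1-2^{-|P(\cT;e,y)|}$ (ES), but after you move $y$ into $C$ its cluster becomes monochromatic and contributes $0$. There can be arbitrarily many such edges, while your gains along the path from $p_m$ into $C$ are bounded. Concrete counterexample (unit weights): let $X=\{z,z',c_1,c_2,c_3,y\}$, $Y=\{y\}$, and let $\cT$ have root $\rho$ with children $v^*$ and $u_1$, where $v^*$ carries the cherry $C=\{z,z'\}\subseteq X-Y$, and $u_1,u_2,u_3$ form a caterpillar with $c_i$ a child of $u_i$ and $y$ the sibling of $c_3$. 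Then
$$\Delta_{(FP,\cT)}(Y)=\tfrac{3}{4}+\tfrac{2}{3}+\tfrac{1}{2}=\tfrac{23}{12},$$
whereas your $\cT'$ (prune $y$, regraft as sibling of $z$) gives $\Delta_{(FP,\cT')}(Y)=\tfrac{2}{3}+\tfrac{1}{2}=\tfrac{7}{6}<\tfrac{23}{12}$; for ES the comparison is $\tfrac{17}{8}$ versus $\tfrac{5}{4}$. So your step ``the total on (iii) and (iv) strictly increases'' cannot be carried out: the net change is strictly negative here, and the loss sits on edges outside all four of your buckets (or wrongly filed under (i) as ``altered only by the removal of the single element $y$'').

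Two further points. First, the opening reduction ``by symmetry I treat only $C\subseteq X-Y$'' is unjustified: $\Delta$ is not symmetric under $Y\leftrightarrow X-Y$, since the coefficients $\gamma_{(\cT,\ell)}(y,e)$ attach only to leaves of $Y$; correspondingly, the paper's proof handles the two cases with different regraftings and, for ES, genuinely different computations. Second, the paper avoids your problem entirely by making a \emph{local} move of a leaf of the monochromatic cluster rather than a global move of a leaf of the opposite colour: it takes $C(v)$ \emph{maximal} monochromatic (so the sibling cluster $C(w)$ is guaranteed to meet both $Y$ and $X-Y$), then prunes a leaf $a\in C(v)$ and regrafts it either to the sibling edge $(u,w)$ (when $C\subseteq X-Y$) or to the root (when $C\subseteq Y$). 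With that choice every other edge's contribution is unchanged (or increases, in the FP root case), and exactly one new bichromatic edge with strictly positive contribution appears, giving the strict inequality. Since the lemma only asks for \emph{some} $\cT'$, a short-distance move adjacent to $C$ is both sufficient and much easier to account for; that is the repair you need. Your observation that the statement is vacuous for $Y=\emptyset$ is correct and worth keeping.
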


\begin{proof}
Let $v$ be a non-leaf vertex of $\cT$, and suppose that $C(v)\subseteq Y$ or $C(v)\subseteq (X-Y)$. Without loss of generality, we may assume that $C(v)$ is maximal with this property. Let $(u, v)$ be the edge of $\cT$ directed into $v$ and note that, for $(u, v)$ and all edges on a path from $v$ to a leaf, the contribution of these edges under either FP or ES to $\Delta_{(\varphi, \cT)}(Y)$ is zero regardless of whether $C(v)\subseteq Y$ or $C(v)\subseteq (X-Y)$. Let $w$ be the child vertex of $u$ that is not $v$. By maximality and $|Y|\neq |X|$, if $C(v)\subseteq Y$, then $(X-Y)\cap C(w)\neq \emptyset$ and, if $C(v)\subseteq (X-Y)$, then $Y\cap C(w)\neq \emptyset$. Let $E$ denote the edge set of $\cT$, let $P$ denote the set of edges of $\cT$ on the path from its root $\rho$ to $u$, and let $a\in C(v)$. Depending on whether $C(v)\subseteq Y$ or $C(v)\subseteq (X-Y)$, we will compare $\cT$ with two other phylogenetic $X$-trees.

Let $\cT_1$ denote the phylogenetic $X$-tree obtained from $\cT$ by pruning $a$ and regrafting it to $\rho$. Let $E_1$ denote the set of edges of $\cT_1$, let $\rho_1$ denote the root of $\cT_1$, let $q$ denote the child of $\rho_1$ that is not $a$, and let $P_1$ denote the set of edges of $\cT_1$ on the path from $q$ to $u$. Let $\cT_2$ denote the phylogenetic $X$-tree obtained from $\cT$ by pruning $a$ and regrafting it to $(u, w)$. Let $E_2$ denote the set of edges of $\cT_2$, let $p_a$ denote the parent of $a$ in $\cT_2$, and let $P_2$ denote the set of edges of $\cT_2$ on the path from its root to $p_a$.

First suppose that $\varphi$ is FP. If $C(v)\subseteq Y$, then
$$\sum_{e\in E-P} \lambda_{\cT}(e)=\sum_{e_1\in E_1-(P_1\cup \{(\rho_1, q)\})} \lambda_{\cT_1}(e_1).$$
Furthermore, as $a\in Y$, and $C_{\cT_1}(u)\cap Y$ and $C_{\cT_1}(u)\cap (X-Y)$ are both non-empty,
$$\sum_{e\in P} \lambda_{\cT}(e) < \sum_{e_1\in P_1} \lambda_{\cT_1}(e_1)$$
and $\lambda_{\cT_1}((\rho_1, q)) > 0$. Hence $\Delta_{(FP, \cT_1)}(Y) > \Delta_{(FP, \cT)}(Y)$. On the other hand, if $C(v)\subseteq (X-Y)$, then it is easily checked that
$$\sum_{e\in E} \lambda_{\cT}(e) = \sum_{e_2\in E_2-\{(u, p_a)\}} \lambda_{\cT_2}(e_2)$$
and
$$\lambda_{\cT_2}((u, p_a)) > 0$$
as $C_{\cT_2}(p_a)$ has a non-empty intersection with $Y$ and $X-Y$. Thus $\Delta_{(FP, \cT_2)}(Y) > \Delta_{(FP, \cT)}(Y)$, and so the lemma holds if $\varphi$ is FP.

Now suppose that $\varphi$ is ES. If $C(v)\subseteq Y$, then
$$\sum_{e\in E} \lambda_{\cT}(e) = \sum_{e_1\in E_1-\{(\rho_1, q)\}} \lambda_{\cT_1}(e_1).$$
Since $C_{\cT_1}(q)\cap Y$ and $C_{\cT_1}(q)\cap (X-Y)$ are both non-empty, $\lambda_{\cT_1}((\rho_1, q)) > 0$, and so $\Delta_{(ES, \cT_1)}(Y) > \Delta_{(ES, \cT)}(Y)$ if $\varphi$ is ES. Furthermore, if $C(v)\subseteq (X-Y)$, then
$$\sum_{e\in E-P} \lambda_{\cT}(e) = \sum_{e_2\in E_2-P_2} \lambda_{\cT_2}(e_2).$$
Hence, as $a\in (X-Y)$, 
\begin{align*}
\Delta_{(ES, \cT_2)}(Y)- & \Delta_{(ES, \cT)}(Y) \\
& = \sum_{e_2\in P_2 } \lambda_{\cT_2}(e_2)-\sum_{e\in P} \lambda_{\cT}(e) \\
& = \sum_{e_2\in P_2-\{(u, p_a)\}}\left({1-}\sum_{y\in Y\cap C(w)} \frac{1}{2^{|P(\cT_2; e_2, y)|}}\right) + \lambda_{\cT_2}((u, p_a)) \\
& \qquad \qquad \qquad {-} \sum_{e\in P}\left({1-}\sum_{y\in Y\cap C(w)} \frac{1}{2^{|P(\cT; e, y)|}}\right) \\
& = -\tfrac{1}{2}\cdot \left(\sum_{e\in P}\left(\sum_{y\in Y\cap C(w)} \frac{1}{2^{|P(\cT; e, y)|}}\right)\right) + \lambda_{\cT_2}((u, p_a)) \\
& \qquad \qquad \qquad + \sum_{e\in P}\left(\sum_{y\in Y\cap C(w)} \frac{1}{2^{|P(\cT; e, y)|}}\right) \\
& = \tfrac{1}{2}\cdot \left(\sum_{y\in Y\cap C(w)} \frac{1}{2^{|P(\cT; e, y)|}}\right) + \lambda_{\cT_2}((u, p_a)) \\
& > 0
\end{align*}
as $\lambda_{\cT_2}((u, p_a)) > 0$ since $C_{\cT_2}(p_a)\cap Y$ and $C_{\cT_2}(p_a)\cap (X-Y)$ are both non-empty. Therefore $\Delta_{(ES, \cT_2)}(Y) > \Delta_{(ES, \cT)}(Y)$. This completes the proof of the lemma.
\end{proof}

An immediate consequence of Lemma~\ref{bichromatic1} is the next corollary.

\begin{corollary}
Let $\cT$ be a phylogenetic $X$-tree, let $k$ be a positive integer such that $k\le |X|-1$, let $Y$ be a subset of $X$ of size $k$, and let $\varphi\in \{FP, ES\}$. Suppose that $\cT$ and $Y$ maximises
$$\Delta_{(\varphi, \cT)}(Y)=PD_{\cT}(Y)-\sum_{x\in Y} \varphi_{\cT}(x)$$
across all phylogenetic $X$-trees and all subsets of $X$ of size $k$. If $\{a, b\}$ is a cherry of $\cT$, then $|\{a, b\}\cap Y|=1$. In particular, $\cT$ has no non-trivial cluster $C$ such that either $C\subseteq Y$ or $C\subseteq (X-Y)$.
\label{bichromatic2}
\end{corollary}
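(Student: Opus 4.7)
The plan is to deduce both assertions immediately from Lemma~\ref{bichromatic1} applied contrapositively under the maximality hypothesis. I would treat the ``in particular'' clause first, since the cherry statement will fall out as a special case.

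Assume $(\cT, Y)$ maximises $\Delta_{(\varphi, \cT)}(Y)$ over all phylogenetic $X$-trees and all size-$k$ subsets. Suppose, for contradiction, that $\cT$ has a non-trivial cluster $C$ with either $C\subseteq Y$ or $C\subseteq X-Y$. Lemma~\ref{bichromatic1} then produces a phylogenetic $X$-tree $\cT'$ with $\Delta_{(\varphi, \cT')}(Y) > \Delta_{(\varphi, \cT)}(Y)$, contradicting the assumed maximality. Hence no such non-trivial monochromatic cluster exists.

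To deduce the cherry statement, let $\{a, b\}$ be a cherry of $\cT$. If $|X|=2$, then $k\le |X|-1=1$ forces $|Y|=1$, so $|\{a,b\}\cap Y|=1$ holds automatically. Otherwise $|X|\ge 3$, and then $\{a,b\}$ satisfies $2\le |\{a,b\}|<|X|$, so it is a non-trivial cluster of $\cT$. If $|\{a,b\}\cap Y|\ne 1$, then either $\{a,b\}\subseteq Y$ or $\{a,b\}\subseteq X-Y$, contradicting the cluster assertion just established. Therefore $|\{a,b\}\cap Y|=1$ as claimed.

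There is essentially no obstacle here: the corollary is a direct translation of Lemma~\ref{bichromatic1} under the maximality hypothesis. The only minor bookkeeping points are verifying that a cherry qualifies as a non-trivial cluster when $|X|\ge 3$ and separately dispatching the degenerate $|X|=2$ case.
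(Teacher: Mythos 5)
Your proposal is correct and matches the paper exactly: the paper offers no separate proof, stating only that the corollary is an immediate consequence of Lemma~\ref{bichromatic1}, which is precisely the contrapositive application you carry out. Your extra bookkeeping (noting a cherry is a non-trivial cluster when $|X|\ge 3$ and dispatching the $|X|=2$ case, where $k\le 1$ forces $|\{a,b\}\cap Y|=1$ directly) is a sound and slightly more careful rendering of the same argument.
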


\begin{lemma}
Let $\cT$ be a phylogenetic $X$-tree, let $k$ be a positive integer such that $k\le |X|-1$, let $Y$ be a subset of $X$ of size $k$, and let $\varphi\in \{FP, ES\}$. Suppose that $\cT$ and $Y$ maximises
$$\Delta_{(\varphi, \cT)}(Y)=PD_{\cT}(Y)-\sum_{x\in Y} \varphi_{\cT}(x)$$
across all phylogenetic $X$-trees and all subsets of $X$ of size $k$. If $(a_1, a_2, \ldots, a_s)$ is a chain of $\cT$, in which $\{a_1, a_2\}$ is a cherry, then
\begin{enumerate}[{\rm (i)}]
\item $|\{a_1, a_2\}\cap Y|=1$ and 

\item for some $p\ge 3$, we have $|\{a_1, a_2, \ldots, a_{p-1}\}\cap Y|=1$ and $\{a_p, a_{p+1}, \ldots, a_s\}\subseteq Y$.
\end{enumerate}
\label{1-chain}
\end{lemma}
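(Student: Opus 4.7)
Part (i) is immediate from Corollary~\ref{bichromatic2}, since $\{a_1, a_2\}$ is a cherry of $\cT$. For part (ii), by (i) we may relabel within the cherry to assume $a_1 \in Y$ and $a_2 \notin Y$. The plan is to prove the upward-closure property: there do not exist $\alpha, \beta$ with $3 \le \alpha < \beta \le s$, $a_\alpha \in Y$ and $a_\beta \notin Y$. Once this is established, setting $p$ to be the smallest index $\ge 3$ such that $\{a_p, \ldots, a_s\} \subseteq Y$ (with $p = s+1$ if this set is empty) yields $p \ge 3$ and $|\{a_1, \ldots, a_{p-1}\} \cap Y| = 1$, the unique element being $a_1$.

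To prove upward-closure, suppose for contradiction there exist such $\alpha, \beta$. Let $\cT'$ be the phylogenetic $X$-tree obtained from $\cT$ by interchanging the labels $a_\alpha$ and $a_\beta$; $\cT'$ has the same underlying topology as $\cT$ and is therefore a valid competitor with the same $Y$. It suffices to show $\Delta_{(\varphi, \cT')}(Y) > \Delta_{(\varphi, \cT)}(Y)$, contradicting the maximality of $(\cT, Y)$. The only edges whose contribution to $\Delta$ changes between $\cT$ and $\cT'$ are those whose child cluster meets $\{a_\alpha, a_\beta\}$, and the pendant edges to $a_\alpha, a_\beta$ themselves contribute zero under either $\varphi$. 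This leaves two groups: (a) the chain-internal edges $(p_{\ell+1}, p_\ell)$ for $\alpha \le \ell \le \beta - 1$, whose cluster in $\cT$ contains $a_\alpha$ but not $a_\beta$; and (b) the edges whose child is a common ancestor of $a_\alpha$ and $a_\beta$, namely the chain edges $(p_{\ell+1}, p_\ell)$ with $\beta \le \ell \le s-1$ together with the edges on the directed path from $\rho$ down to $p_s$.

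For $\varphi = FP$, each group-(b) edge is unaffected since $|Y \cap C(v)|$ is invariant under the swap, while each group-(a) edge gains $+1/\ell$; the total change is $\sum_{\ell = \alpha}^{\beta - 1} 1/\ell > 0$. For $\varphi = ES$, each group-(a) edge still gains $+1/2^{\ell - \alpha + 1}$, but each group-(b) edge now loses, because the swap replaces the weight $1/2^{|P(\cT; v, a_\alpha)|}$ contributed by $a_\alpha \in Y$ with the strictly larger weight $1/2^{|P(\cT; v, a_\beta)|}$. The main obstacle is showing that the gains from (a) strictly outweigh the losses from (b). Summing both as geometric series and telescoping, I expect the net change to simplify to $(1 - 1/2^{\beta - \alpha}) \cdot 1/2^{M+1}$, where $M$ depends only on $s - \beta$ and the distance from $p_s$ to $\rho$; this is strictly positive, delivering the desired contradiction.
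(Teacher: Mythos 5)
Your proposal is correct, and it reaches the paper's conclusion by a genuinely different swap. Part (i) is handled identically (via Corollary~\ref{bichromatic2}). For part (ii), the paper takes the \emph{least} index $i\ge 3$ with $a_i\in Y$ and $a_{i+1}\in X-Y$ and transposes the \emph{adjacent} pair $a_i, a_{i+1}$: for FP only the single chain edge between their parents changes its contribution, and for ES the whole change in $\Delta$ collapses to $ES_{\cT}(a_i)-ES_{\cT'}(a_i)\ge \tfrac12-\left(\tfrac14+\tfrac18+\cdots\right)>0$, so no bookkeeping over all affected edges is needed. You instead transpose an \emph{arbitrary} violating pair $a_\alpha, a_\beta$ and track every affected edge. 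For FP your version is clean and arguably tidier than the paper's: each group-(a) edge gains exactly $1/\ell$ regardless of how many $Y$-leaves lie below it (the paper's computation $\left(1-\tfrac1i\right)-\left(1-\tfrac2i\right)$ implicitly assumes exactly two such leaves, which the least-descent choice does not force, though the difference is $1/i$ either way). The one soft spot is ES, where you only assert (``I expect'') that the telescoping closes; it does: the group-(a) gains sum to $1-2^{-(\beta-\alpha)}$, each group-(b) edge with child $v$ loses $\left(1-2^{-(\beta-\alpha)}\right)2^{-D_v}$ where $D_v$ is the distance from $v$ to the position of $a_\beta$, and these $D_v$ run over $1,2,\ldots,s-\beta+d$ with $d$ the number of edges from $\rho$ to $p_s$, so the total loss is $\left(1-2^{-(\beta-\alpha)}\right)\left(1-2^{-(s-\beta+d)}\right)$ and the net change is $\left(1-2^{-(\beta-\alpha)}\right)2^{-(s-\beta+d)}>0$, matching your predicted form (your $M+1$ equals $s-\beta+d$). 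A quicker way to finish that step, had you wanted it: in a binary tree with unit edge lengths, the ES score of a leaf at depth $D$ is exactly $2-2^{1-D}$; your swap changes only $a_\alpha$'s depth among the leaves of $Y$, and $PD_{\cT}(Y)$ is unchanged since $a_1\in Y$ keeps the entire chain spine inside the Steiner tree, so the net change is $2^{1-D_\beta}-2^{1-D_\alpha}>0$ immediately. In short: correct, with the ES computation sketched rather than executed, but the sketch is accurate; the paper's minimal-descent adjacent swap buys a shorter ES estimate, while your arbitrary swap buys a uniform treatment of FP and an exact formula for the improvement.
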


\begin{proof}
By Corollary~\ref{bichromatic2}, $(a_1, a_2, \ldots, a_s)$ satisfies (i). Suppose that $(a_1, a_2, \ldots, a_s)$ does not satisfy (ii). Let $i\ge 3$ be the least index such $a_i\in Y$ but $a_{i+1}\in X-Y$. Let $\cT'$ be the phylogenetic $X$-tree obtained from $\cT$ by interchanging $a_i$ and $a_{i+1}$, that is, pruning $a_i$ and regrafting it to the edge directed into the parent of $a_{i+1}$ so that $(a_1, a_2, \ldots, a_{i-1}, a_{i+1}, a_i, a_{i+2}, \ldots, a_s)$ is a chain of $\cT'$. Let $p_i$ and $p_{i+1}$ denote the parents of $a_i$ and $a_{i+1}$ in $\cT$, respectively, and let $p'_i$ and $p'_{i+1}$ denote the parents of $a_i$ and $a_{i+1}$ in $\cT'$, respectively.

If $\varphi$ is the Fair Proportion index, then
\begin{align*}
\Delta_{(FP, \cT')}(Y)-\Delta_{(FP, \cT)}(Y) & = \lambda_{\cT'}((p'_i, p'_{i+1}))-\lambda_{\cT}((p_{i+1}, p_i)) \\
& = {\textstyle \left(1-\frac{1}{i}\right) - \left(1-\frac{2}{i}\right)} \\
& > 0.
\end{align*}
This contradiction to the maximality of $\cT$ and $Y$ implies that the lemma holds when $\varphi$ is FP.

Now suppose that $\varphi$ is the Equal Splits index. Let $P$ denote the path in $\cT$ from the root to $p_i$ and let $P'$ denote the path in $\cT'$ from the root to $p'_i$. Then
\begin{align*}
\Delta_{(ES, \cT')}(Y)- \Delta_{(ES, \cT)}(Y) & = -\varphi_{\cT'}(a_i) + \varphi_{\cT}(a_i) \\
& = -\sum_{e'\in P'} \frac{1}{2^{|P(\cT'; e', a_i)|}} + \sum_{e\in P} \frac{1}{2^{|P(\cT; e, a_i)|}}.
\end{align*}
For all $e\in P-\{(p_{i+1}, p_i)\}$, we have $|P(\cT'; e, a_i)|=|P(\cT; e, a_i)|-1$, and the contribution of $(p_{i+1}, p_i)$ to $ES_{\cT}(a_i) $ is $\tfrac{1}{2}$. So
\begin{align*}
\Delta_{(ES, \cT')}(Y) - \Delta_{(ES, \cT)}(Y) & = -2\sum_{e\in P-\{(p_{i+1}, p_i)\}} \frac{1}{2^{|P(\cT; e, a_i)|}} \\
& \qquad \qquad \quad + \left(\tfrac{1}{2} + \sum_{e\in P-\{(p_{i+1}, p_i)\}} \frac{1}{2^{|P(\cT; e, a_i)|}}\right) \\
& = \tfrac{1}{2} -\sum_{e\in P-\{(p_{i+1}, p_i)\}} \frac{1}{2^{|P(\cT; e, a_i)|}} \\
& \ge \tfrac{1}{2} - {\textstyle \left(\frac{1}{4}+\frac{1}{8}+\cdots +\frac{1}{2^{|P|}}\right)} \\
& > 0.
\end{align*}
This contradiction to the maximality of $\cT$ and $Y$ implies that the lemma holds when $\varphi$ is ES, thereby completing the proof of the lemma.
\end{proof}

The next theorem is illustrated in Fig.~\ref{max-fig}. In particular, an edge-weighted phylogenetic $X$-tree optimising $\Delta_{(\varphi, \cT, \ell)}(Y)$ as in the statement of this theorem is shown in Fig.~\ref{max-fig}, where $\varphi\in \{FP, ES\}$. It is interesting to note that the outcome of Theorem~\ref{across-max-fp} is the same for FP and ES, although the proof of the theorem requires FP and ES to be considered separately.

\begin{figure}
\center
\input{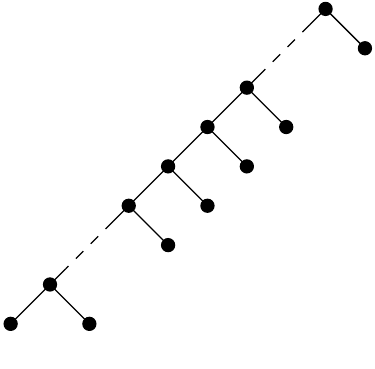_t}
\caption{An illustration of Theorem~\ref{across-max-fp}, where $Y=\{x_1,x_i,x_{i+1},\ldots,x_n\}$, so $|\{x_1, x_2\}\cap Y|=1$, {and $X-Y=\{x_2, x_3, \ldots, x_{i-1}\}$}.}
\label{max-fig}
\end{figure}

\begin{theorem}
Let $k$ be a positive integer such that $k\le |X|-1$, let $\ell_{\max}$ be a positive real, and let $\varphi\in \{FP, ES\}$. Then an edge-weighted phylogenetic $X$-tree $(\cT, \ell)$ with $\ell(e)\le \ell_{\max}$ for all $e\in E(\cT)$ and a subset $Y$ of $X$ of size $k$ maximises
$$\Delta_{(\varphi, \cT, \ell)}(Y)=PD_{(\cT, \ell)}(Y)-\sum_{x\in Y} \varphi_{(\cT, \ell)}(x)$$
across all phylogenetic $X$-trees $(\cT', \ell')$ with $\ell'(e)\le \ell_{\max}$ for all $e\in E(\cT')$ and all subsets of $X$ of size $k$ if and only if $\ell(e)=\ell_{\max}$ for all $e\in E(\cT)$ and $X$ is a chain $(x_1, x_2, \ldots, x_n)$ such that $|\{x_1, x_2\}\cap Y|=1$, and, for some $i\ge 3$, $|\{x_1, x_2, \ldots, x_{i-1}\}\cap Y|=1$ and $\{x_i, x_{i+1}, \ldots, x_n\}\subseteq Y$.
\label{across-max-fp}
\end{theorem}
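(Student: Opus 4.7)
By Lemma~\ref{lmax} we may assume $\ell(e)=\ell_{\max}$ for every edge $e$, so write $\cT$ for $(\cT,\ell_{\max})$. The proof establishes the two implications separately.

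For the ``if'' direction, with $\cT$ a caterpillar on the chain $(x_1,\ldots,x_n)$ and $Y$ satisfying the stated conditions, I would compute $\Delta_{(\varphi,\cT)}(Y)$ directly by summing the edge contributions $\lambda_\cT(e)$ from Section~\ref{fixed-tree}. By the symmetry of the cherry $\{x_1,x_2\}$ the resulting expression depends only on $n,k,\ell_{\max}$ and $\varphi$; call this common value $\Delta^{*}$.

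For the ``only if'' direction, suppose $(\cT,Y)$ attains the maximum. By Corollary~\ref{bichromatic2}, no non-trivial cluster of $\cT$ is monochromatic with respect to $Y$; in particular, every cherry has exactly one leaf in $Y$. Pick any cherry $\{a_1,a_2\}$ of $\cT$ and extend it maximally to a chain $(a_1,\ldots,a_s)$ in $\cT$ with first parent $p_s$; Lemma~\ref{1-chain} already assigns this chain the nested $Y$-structure described in the theorem. It remains to show $s=n$, for then $\cT$ is the caterpillar along the chain $X$ and a final application of Lemma~\ref{1-chain} to the full chain yields the prescribed $Y$-structure.

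To force $s=n$, I argue by contradiction: assume $s<n$ and exhibit a phylogenetic $X$-tree with strictly larger diversity difference. Let $q$ be the parent of $p_s$ and $w$ its sibling in $\cT$; maximality of the chain makes $w$ internal with $|C(w)|\ge 2$, and Corollary~\ref{bichromatic2} makes both $C(w)\cap Y$ and $C(w)-Y$ non-empty. I construct $\cT'$ by a rooted SPR that prunes a suitably chosen leaf $c\in C(w)$ and regrafts it on the edge $(q,p_s)$, thereby extending the chain to $(a_1,\ldots,a_s,c)$. An edge-by-edge comparison of $\lambda_\cT(e)$ and $\lambda_{\cT'}(e)$, modelled on the interchange calculations in the proof of Lemma~\ref{1-chain} but split into two subcases because $\gamma_\cT(y,e)=1/|C(e)|$ for FP and $\gamma_\cT(y,e)=1/2^{|P(\cT;v,y)|}$ for ES, shows that $\Delta$ does not decrease under the move. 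Iterating this SPR, combined where necessary with the in-chain swaps from the proof of Lemma~\ref{1-chain}, eventually produces a caterpillar, which by the ``if'' direction has value $\Delta^{*}$; a careful accounting of the edge comparisons along the sequence then delivers the strict inequality $\Delta_{(\varphi,\cT)}(Y)<\Delta^{*}$, contradicting optimality. The main obstacle is precisely this strictness: a single SPR move can preserve $\Delta$ exactly, so the strict improvement is not visible at any individual step and must be extracted from the cumulative transformation into the caterpillar whose value is the closed-form $\Delta^{*}$ produced by the ``if'' direction, with FP and ES tracked separately because of their different $\gamma$-functions.
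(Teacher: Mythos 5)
Your overall skeleton matches the paper's: reduce to unit weights via Lemma~\ref{lmax}, invoke Corollary~\ref{bichromatic2} and Lemma~\ref{1-chain}, and transform any putative non-caterpillar optimum into the caterpillar by rooted subtree prune and regraft moves. However, the heart of the ``only if'' direction --- showing that a tree with more than one cherry is \emph{strictly} suboptimal --- is exactly the part you have not supplied, and your stated plan for it cannot work as described. If each SPR move in your sequence only guarantees $\Delta_{(\varphi,\cT)}(Y)\le\Delta_{(\varphi,\cT')}(Y)$, then the cumulative comparison with the caterpillar yields only $\Delta_{(\varphi,\cT)}(Y)\le\Delta^{*}$, which is entirely consistent with the non-caterpillar $\cT$ also being a maximiser; no contradiction with optimality follows. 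Strictness cannot be ``extracted from the cumulative transformation'': a chain of weak inequalities remains weak, and ruling out a non-caterpillar optimum requires exhibiting at least one strictly improving step --- precisely the obstacle you concede without resolving. Your move is also underspecified: you prune ``a suitably chosen leaf $c\in C(w)$'' where $C(w)$ is an arbitrary bichromatic pendant subtree, and you give no argument, for either FP or ES, that some choice of $c$ makes the edge-by-edge comparison non-negative; for FP the regraft changes $n(e)$ for every edge on the spine of $C(w)$ above $c$, and for ES it changes the path lengths from such edges to every $Y$-leaf below them, so the sign of the change is genuinely in doubt.

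The paper sidesteps both problems by a sharper structural choice. If $\cT$ has two cherries, it selects a vertex $v$ whose subtree contains \emph{exactly two} cherries, so that by Lemma~\ref{1-chain} and maximality the subtree below $v$ consists of two pendant chains $A=(a_1,\ldots,a_s)$ and $B=(b_1,\ldots,b_t)$ with completely known $Y$-structure; all moves are then between these two explicit chains, and each is verified to increase $\Delta$ strictly. For FP one first forces $|Y\cap A|=|Y\cap B|=1$ (regrafting a $Y$-leaf $a_s$ or $b_t$ onto the edge into $v$ gives the strict gain $1-\tfrac{|Y\cap A|}{s}<1-\tfrac{|Y\cap(A\cup B)|-1}{s+t-1}$), and then merges the shorter chain into the longer one leaf by leaf, each step strict via $1-\tfrac{1}{s}<1-\tfrac{1}{t+1}$ since $t\ge s$. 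For ES a single move (pruning $a_2$ and regrafting it beside $b_1$) strictly decreases $ES(a_1)+ES(b_1)$ because $t\ge s$ gives $\tfrac{1}{2^{|P(\cT';\rho,b_1)|-1}}<\tfrac{1}{2^{|P(\cT;\rho,a_1)|-1}}$, while all other index values are unchanged. Each such step contradicts maximality on its own, so no global accounting is needed. Your ``if''-direction computation of the common value $\Delta^{*}$ is fine and in fact supplies a detail the paper glosses over, but to repair the proof you must replace your single generic move by an analysis of the paper's kind, restricted to the two-chain configuration where every $\lambda$-comparison is computable and strict, with FP and ES treated separately.
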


\begin{proof}
Let $(\cT, \ell)$ be an edge-weighted phylogenetic $X$-tree with $\ell(e)\le \ell_{\max}$ for all $e\in E(\cT)$ and let $Y$ be a subset of $X$ of size $k$. Suppose that $(\cT, \ell)$ and $Y$ maximise $\Delta_{(\varphi, \cT, \ell)}(Y)$ across all edge-weighted phylogenetic $X$-trees $(\cT', \ell')$ with $\ell'(e)\le \ell_{\max}$ for all $e\in E(\cT')$ and all subsets of $X$ of size $k$. {Recall that, by Lemma~\ref{lmax}, we may assume that all edges have weight $\ell_{\max}=1$.}

We first show that $\cT$ does not have two distinct cherries. Suppose that $\cT$ has two such cherries. Then $\cT$ has a vertex $v$ in which $C(v)$ contains exactly two cherries. By Lemma~\ref{1-chain} and the maximality of $(\cT, \ell)$ and $Y$, the set $C(v)$ consists of two disjoint chains $(a_1, a_2, \ldots, a_s)$ and $(b_1, b_2, \ldots, b_t)$ each of which satisfies properties (i) and (ii) in the statement of Lemma~\ref{1-chain}. Let $A=\{a_1, a_2, \ldots, a_s\}$ and let $B=\{b_1, b_2, \ldots, b_t\}$.

\begin{figure}
\center
\input{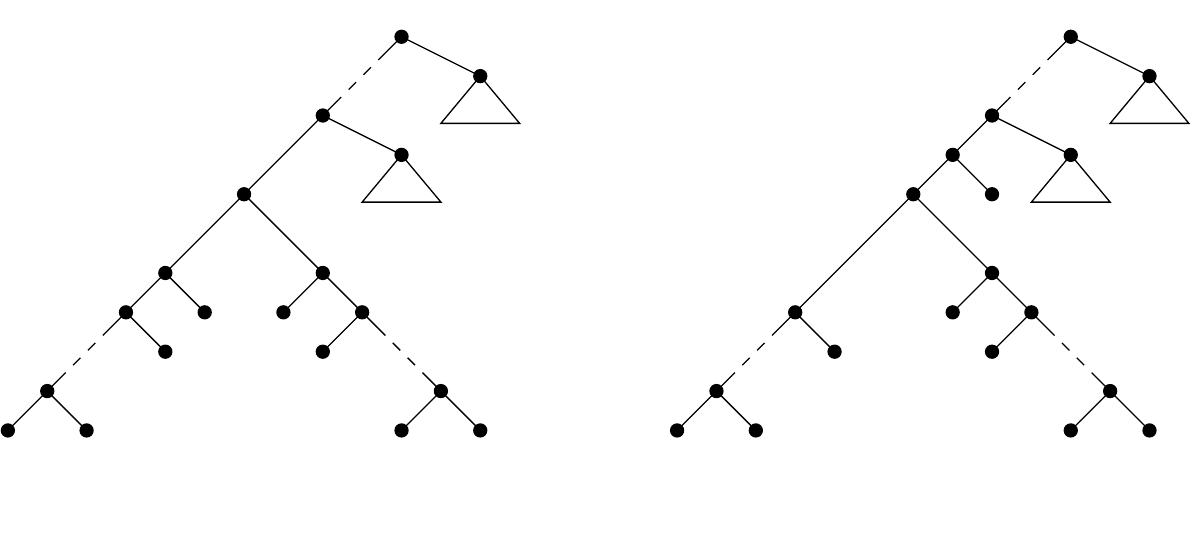_t}
\caption{Illustrating the pruning and regrafting of the leaf $a_s$ in the proof of Theorem~\ref{across-max-fp} for FP.}
\end{figure}

First suppose that $\varphi$ is FP. Without loss of generality, we may assume that
$$\frac{|Y\cap A|}{s}\ge \frac{|Y\cap B|}{t}.$$
Say $s\ge 3$ and $a_s\in Y$. Let $\cT'$ be the phylogenetic $X$-tree obtained from $\cT$ by pruning $a_s$ and regrafting it to the edge $(u, v)$ directed into $v$. {This operation is shown in Fig.~\ref{max-fig}(ii), where $\cT$ is drawn above $\cT'$.} Let $p_s$ and $p'_s$ denote the parents of $a_s$ in $\cT$ and $\cT'$, respectively. Then it is easily seen that
$$\sum_{e\in E(\cT)-\{(v, p_s)\}} \lambda_{\cT}(e)=\sum_{e'\in E(\cT')-\{(p'_s, v)\}} \lambda_{\cT'}(e').$$
Furthermore, as $|Y\cap A|/s\ge |Y\cap B|/t$,
\begin{align*}
\lambda_{\cT}((v, p_s)) & = 1-\frac{|Y\cap A|}{s} \\
& \le 1-\frac{|Y\cap (A\cup B)|}{s+t} \\
& < 1-\frac{|Y\cap (A\cup B)|-1}{s+t-1} \\
& = \lambda_{\cT'}((p'_s, v)).
\end{align*}
Thus $\Delta_{(FP, \cT)}(Y) < \Delta_{(FP, \cT')}(Y)$. It follows that $|Y\cap A|=1$, and so we may assume that $Y\cap A=\{a_1\}$.

Say $t\ge 3$ and $b_t\in Y$. Now let $\cT'$ be the phylogenetic $X$-tree obtained from $\cT$ by pruning $b_t$ and regrafting it to $(u, v)$. let $p_t$ and $p'_t$ denote the parents of $b_t$ in $\cT$ and $\cT'$, respectively. Then
$$\sum_{e\in E(\cT)-\{(v, p_t)\}} \lambda_{\cT}(e) = \sum_{e'\in E(\cT')-\{(p'_t, v)\}} \lambda_{\cT'}(e').$$
Also, as $|Y\cap A|=1$ and $|Y\cap B|\ge 2$,
\begin{align*}
\lambda_{\cT}((v, p_t)) & = 1-\frac{|Y\cap B|}{t} \\
& < 1-\frac{|Y\cap B|}{s+t} \\
& \le 1-\frac{|Y\cap (A\cup B)|-1}{s+t-1} \\
& = \lambda_{\cT'}((p'_t, v)).
\end{align*}
Thus $\Delta_{(FP,\cT)}(Y) < \Delta_{(FP,\cT')}(Y)$, contradicting the maximality of $(\cT, \ell)$ and $Y$. Therefore $|Y\cap B|=1$, and so we may assume that $Y\cap (A\cup B)=\{a_1, b_1\}$.

Without loss of generality, we may now assume that $t\ge s$. Let $p_s$ and $q_t$ denote the parents of $a_s$ and $b_t$ in $\cT$, respectively. Let $\cT''$ be the phylogenetic $X$-tree obtained from $\cT$ by pruning $a_s$ and regrafting it to the edge $(u, q_t)$. Let $p''_s$ denote the parent of $a_s$ in $\cT''$. Then
$$\sum_{e\in E(\cT)-\{(v, p_s)\}} \lambda_{\cT}(e) = \sum_{e''\in E(\cT'')-\{(v, p''_s)\}} \lambda_{\cT''}(e'')$$
and
\begin{align*}
\lambda_{\cT}((u, p_s)) & = {\textstyle 1-\frac{1}{s}} \\
& < {\textstyle 1-\frac{1}{t+1}} \\
& = \lambda_{\cT''}((u, p''_s))
\end{align*}
as $t\ge s$. Repeating this process for each of $a_{s-1}, a_{s-2}, \ldots, a_2$, we obtain a phylogenetic tree $X$-tree $\cT_1$ with one less cherry than $\cT$, and
$$\Delta_{(FP, \cT)}(Y) < \Delta_{(FP, \cT_1)}(Y),$$
a contradiction to maximality. We deduce that $\cT$ has exactly one cherry if $\varphi$ is FP.

\begin{figure}
\center
\input{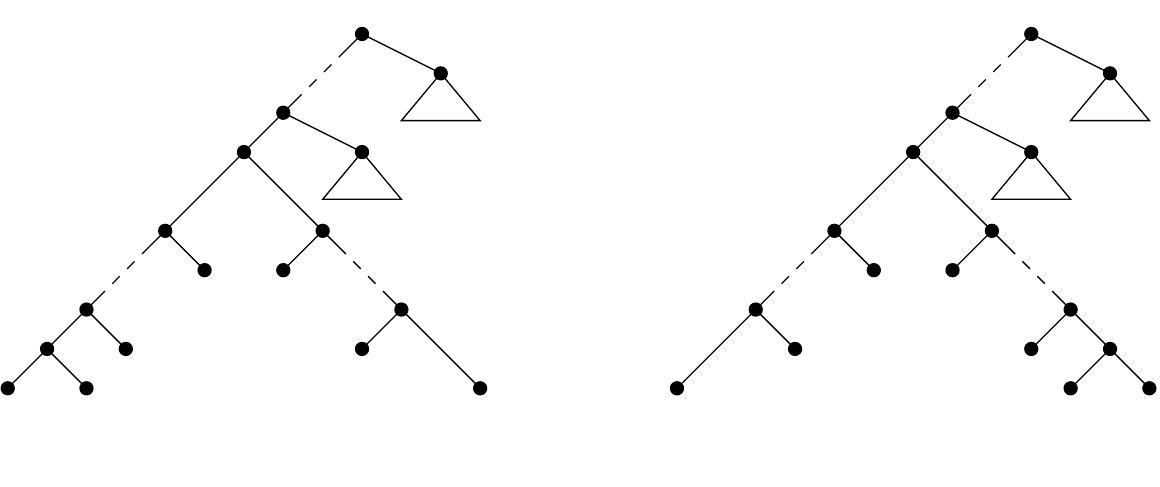_t}
\caption{{Illustrating the pruning and regrafting of the leaf $a_2$ in the proof of Theorem~\ref{across-max-fp} for ES.}}
\label{max-fig2}
\end{figure}

Now suppose that $\varphi$ is ES. Without loss of generality, we may assume that $t\ge s$ and $a_1, b_1\in Y$. Let $\cT'$ be the phylogenetic $X$-tree obtained from $\cT$ by pruning $a_2$ and regrafting it to the edge $(q_2, b_1)$, where $q_2$ is the parent of $b_2$ in $\cT$ (as depicted in Fig.~\ref{max-fig2}). Then, noting that $a_2$ is not in $Y$,
$$\sum_{y\in (Y-\{a_1, b_1\})} ES_{\cT}(y) = \sum_{y\in (Y-\{a_1, b_1\})} ES_{\cT'}(y).$$

Furthermore, since the length of the each path from an edge to $a_1$ is reduced by $1$ in forming $\cT'$ {as} edge $(p_3, p_2)$ is {suppressed},
{$$\sum_{e\in P(\cT; \rho, a_1)} \frac{1}{2^{|P(\cT; e, a_1)|}} - \sum_{e'\in P(\cT'; \rho, a_1)} \frac{1}{2^{|P(\cT'; e', a_1)|}} = \frac{1}{2^{|P(\cT; \rho, a_1)|-1}}$$}
and. since the length of the each path from an edge to $b_1$ is increased by 1 in forming $\cT'$ {as} edge $(q_2, {p'_2})$ is added,
{$$\sum_{e'\in P(\cT'; \rho, b_1)} \frac{1}{2^{|P(\cT'; e', b_1)|}} - \sum_{e\in P(\cT; \rho, b_1)} \frac{1}{2^{|P(\cT; e, b_1)|}} = \frac{1}{2^{|P(\cT'; \rho, b_1)|-1}}.$$}
{But, as $t\ge s$, 
$$\frac{1}{2^{|P(\cT'; \rho, b_1)|-1}} < \frac{1}{2^{|P(\cT; \rho_1, a_1)|-1}}.$$
Therefore
\begin{align*}
\sum_{y\in \{a_1, b_1\}} ES_{\cT}(y) - \sum_{y\in \{a_1, b_1\}} ES_{\cT'}(y) & = \frac{1}{2^{|P(\cT; \rho, a_1)|-1}} - \frac{1}{2^{|P(\cT'; \rho, b_1)|-1}} \\
& > 0.
\end{align*}
It now follows that $\Delta_{(ES, \cT)}(Y) < \Delta_{(ES, \cT')}(Y)$.} This contradiction to maximality implies that $\cT$ has exactly one cherry if $\varphi$ is ES. It now follows by Lemma~\ref{1-chain} that, if $\varphi\in \{FP, ES\}$, then to maximise $\Delta_{(\varphi, \cT)}(Y)$ we have that $X$ is a chain and $Y$ is a subset of $X$ of size $k$ as described in the statement of the theorem.
\end{proof}

\section{Discussion}
\label{discussion}

The results for Fair Proportion and Equal Splits in the last section are strikingly similar. Indeed, they are exactly the same when considering the outcomes of Theorem~\ref{across-max-fp}. While the proof of Theorem~\ref{across-max-fp} eventually required separating into two parts to independently consider FP and ES, the two approaches taken were alike. This suggests that there is probably a natural class of phylogenetic diversity indices which these results are representative of. {If so, what is this class? The class of descendant diversity indices is unlikely to be sufficient. However, what if we additionally required the following property? Let $(\cT, \ell)$ be an edge-weighted phylogenetic $X$-tree and let $\varphi$ be a descendant diversity index. Let $e=(u, v)$ be an edge of $\cT$, and let $\{x, y\}\subseteq X$ such that $\{x, y\}\subseteq C(v)$. If the number of edges from $v$ to $y$ is at most the number of edges from $v$ to $x$, then the contribution of $\ell(e)$ to $\varphi_{(\cT, \ell)}(x)$ is at most the contribution of $\ell(e)$ to $\varphi_{(\cT, \ell)}(y)$.}

\end{document}